\documentclass[superscriptaddress, cha, longbibliography, twocolumn]{revtex4}

\usepackage{amsmath} 
\usepackage{amsfonts}
\usepackage{amsthm}
\usepackage{bm}
\usepackage{siunitx}
\usepackage{graphicx}

\usepackage[utf8]{inputenc}
\usepackage[T1]{fontenc}


\newtheorem{theorem}{Theorem}
\newtheorem{lemma}{Lemma}

\begin{document}

\title{Dynamic mode decomposition for Koopman spectral analysis of elementary cellular automata}

\author{Keisuke Taga}
\email{tagaksk@akane.waseda.jp}
\affiliation{Department of Physics, School of Advanced Science and Engineering, Waseda University, Tokyo 169-8555, Japan}

\author{Yuzuru Kato}
\affiliation{Department of Complex and Intelligent Systems, School of Systems Information Science, Future University Hakodate, Hokkaido 041-8655, Japan}

\author{Yoshihiro Yamazaki}
\affiliation{Department of Physics, School of Advanced Science and Engineering, Waseda University, Tokyo 169-8555, Japan}

\author{Yoshinobu Kawahara}
\affiliation{Graduate School of Information Science and Technology, Osaka University, Osaka 565-0871, Japan,  
and Center for Advanced Intelligence Project, RIKEN, Tokyo 103-0027, Japan 
}

\author{Hiroya Nakao}
\affiliation{Department of Systems and Control Engineering, School of Engineering, Tokyo Institute of Technology, Tokyo 152-8552, Japan}
\date{\today}

\begin{abstract}
We apply Dynamic Mode Decomposition (DMD) to Elementary Cellular Automata (ECA).
Three types of DMD methods are considered and the reproducibility of the system dynamics and Koopman eigenvalues from observed time series are investigated.
{While} standard DMD fails to reproduce the system dynamics and Koopman eigenvalues associated with a given periodic orbit in some cases, Hankel DMD with delay-embedded time series improves reproducibility.
{However, Hankel DMD} can still fail to reproduce all the Koopman eigenvalues in specific cases. 
We propose an Extended DMD method for ECA that uses nonlinearly transformed time series with discretized Walsh functions and show that it can completely reproduce the dynamics and Koopman eigenvalues.
Linear-algebraic backgrounds for the reproducibility of the system dynamics and Koopman eigenvalues are also discussed.
\end{abstract}

\maketitle

{\bf
Dynamic mode decomposition (DMD) provides a data-driven approach to extracting spectral properties of the Koopman operator, a linear operator describing the time evolution of observables of a dynamical system, from time-series data.
Though DMD is promising for applications to data-driven modeling of nonlinear dynamical systems, the reproducibility of the dynamics and the Koopman eigenvalues is not always clear for 
spatially-extended systems whose dynamical variables depend both on space and time.
In this paper, we apply DMD to Elementary Cellular Automata (ECA), which is the simplest example of 
dynamical systems with only a finite number of discrete states that describe spatio-temporal patterns, and assess the performance of three types of DMD algorithms.
}

\section{Introduction}
The Koopman operator describes the evolution of {\it observables} (observation functions) of dynamical systems~\cite{koopman1931,vonneumann1932,mezic2005spectral,budivsic2012applied,bollt2018matching,mauroy2020koopman,brunton2022modern}.
By lifting the dynamics to a high-dimensional space of observables, 
the {system's behavior} can be described by a linear Koopman operator, even if the original dynamical system is nonlinear.
Thus, we can apply linear spectral methods to analyze the nonlinear time evolution of the system, {providing} powerful approaches to the analysis and control of complex dynamical systems.
However, for continuous-valued dynamical systems, the Koopman operator acts on an infinite-dimensional function space and is generally difficult to analyze.
In particular, for spatially-extended dynamical systems with space-dependent variables described by partial differential equations, explicit spectral properties of the Koopman operator have been discussed only for solvable systems such as the Burger's equation~\cite{nathan2018applied,page2018koopman,nakao2020spectral} or Korteweg-de Vries equation~\cite{kdv1,kdv2}.

Dynamic Mode Decomposition (DMD) provides a numerical method for extracting spectral properties of the Koopman operator from time-series data of dynamical systems~\cite{schmid2010dynamic, tu2013dynamic, williams2015edmd, kutz2016dynamic, kawahara2016dynamic, arbabi2017hdmd, li2017edmd, mauroy2020koopman, taga2021ecakoopman, brunton2022modern}.
Thus, DMD has attracted much interest as a data-driven approach to Koopman spectral analysis.
{It} has been used {to analyze} various dynamical systems such as fluid flows, power grids, and pedestrian dynamics.
For spatially-extended continuous-valued dynamical systems, DMD typically yields a large number of eigenvalues, some of which approximate the Koopman eigenvalues. 
However, in general, it is difficult to verify the accuracy of the estimated eigenvalues and to {identify} the eigenvalues that are {essential} for the dynamics.

Our aim in this research is to assess the performance of different {DMD} algorithms by applying them to Elementary Cellular Automata (ECA)~\cite{wolfram2002new}.
ECA provide the simplest example of dynamical systems with only a finite number of states (finite-state systems) that describe spatio-temporal patterns.
Despite the simplicity of their evolution rules, ECA can exhibit various dynamics including chaotic ones.
ECA have been analyzed as the simplest theoretical model of computation, and they have also been studied as mathematical models of real-world spatial patterns such as the pigmentation of sea shells~\cite{coombes2009geometry}, congestion patterns of traffic flows~\cite{nishinari1998analytical}, and peeling patterns of adhesive tapes~\cite{ohmori2019comments}.
ECA have 256 evolution rules in total and Wolfram's qualitative classification of the ECA rules into four classes is particularly well known~\cite{wolfram2002new}.
In our previous study~\cite{taga2021ecakoopman}, we analyzed the spectral properties of the Koopman operator for ECA.
For ECA, the Koopman operator is finite-dimensional and can be represented by a finite-dimensional matrix (see Sec. II for details)~\cite{taga2021ecakoopman}.
Therefore, we can make rigorous statements {about} the spectral properties of the Koopman operator~\cite{taga2021ecakoopman}, which can be compared with the results of DMD.
Thus, ECA provide a good testbed for assessing various {DMD} algorithms.

In this paper, we apply three types of DMD algorithms to ECA and assess their performance. We consider standard DMD (Exact DMD), Hankel DMD with delay-embedded time series, and Extended DMD with nonlinear basis functions, and investigate their performance in reproducing the system dynamics and Koopman eigenvalues.
This paper is organized as follows. 
The Koopman operator and DMD for ECA are introduced in Secs.~II and~III, respectively. Section~IV presents the results of standard DMD, Hankel DMD, and Extended DMD for ECA. Section~V gives a summary.

\section{Koopman Operator for ECA}
Each rule of ECA describes a 3-neighborhood cellular automaton with binary values on a one-dimensional lattice~\cite{wolfram1983statistical,wolfram2002new,coombes2009geometry,nishinari1998analytical}. 
As an example, the time evolution of ECA with rule $60$ (ECA 60) is presented in Fig.~\ref{fig1}.
The state of each cell at the next time step is determined by the cell itself and the two neighboring cells. This rule can be represented using 8 
{binary} digits, which are used to label the rule. There are $2^8=256$ possible rules for ECA in total.
Though the evolution rules of ECA are simple, they can exhibit rich dynamics including chaotic ones.
For example, Fig.~\ref{fig2} shows the typical dynamics of 
ECA 8, ECA 73, ECA 60, and ECA 54 on a lattice of $N=100$ cells with periodic boundary conditions. Each figure {displays} the typical dynamics of the 4 classes (Class I-IV) of Wolfram's classification~\cite{wolfram2002new}.

\begin{figure}[htbp]
    \centering
    \includegraphics[width=1\linewidth]{./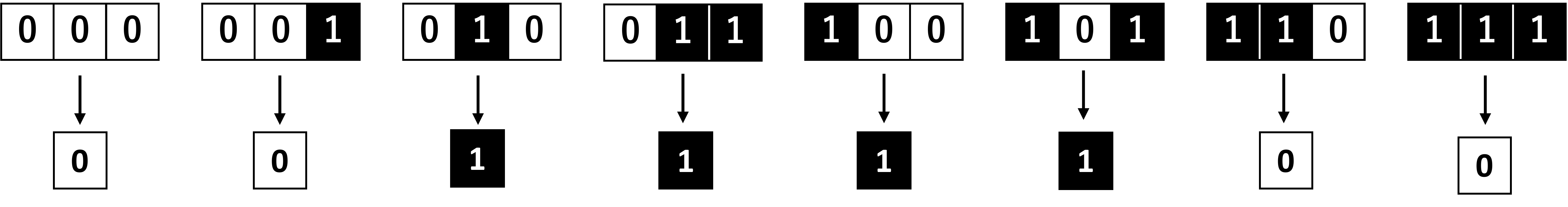}
    \caption{Evolution rule of ECA 60.}
    \label{fig1}
\end{figure}

\begin{figure}[h]
    \centering
    \includegraphics[width=1\linewidth]{./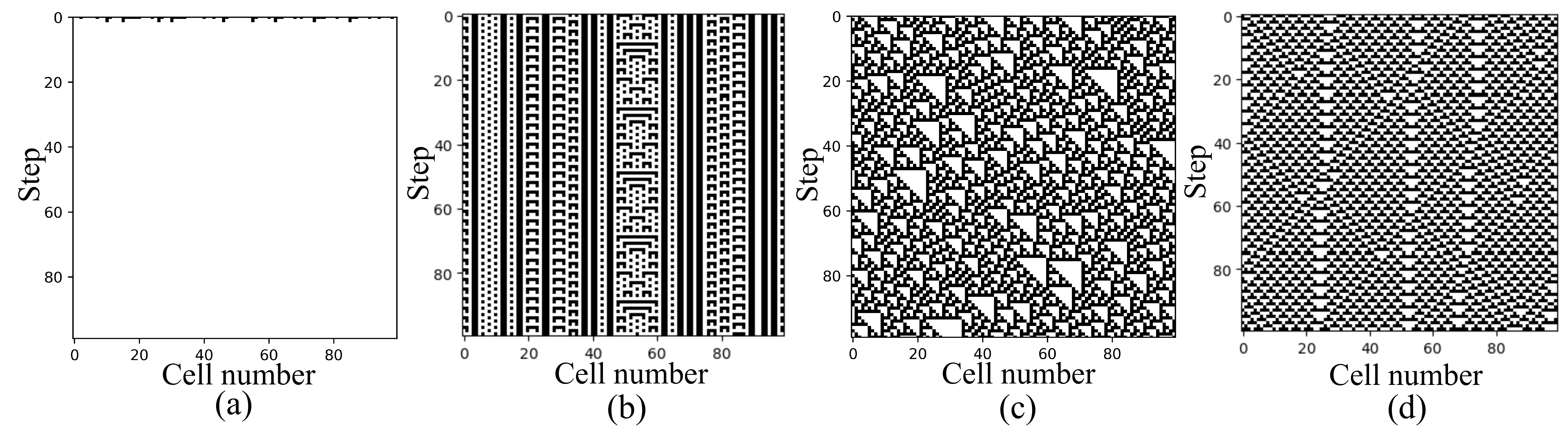}
    \caption{Typical dynamics of ECA. (a) ECA 8 (Class I), (b) ECA 73 (Class II), (c) ECA 60 (Class III), and (d) ECA 54 (Class IV).}
    \label{fig2}
\end{figure}

The Koopman operator is a linear operator that acts on the observables of dynamical systems~\cite{koopman1931, vonneumann1932, mezic2005spectral, budivsic2012applied, bollt2018matching,mauroy2020koopman, brunton2022modern}. 
For a time-discrete dynamical system
\begin{align}
{\bm x}_{n+1}={\bm F}({\bm x}_n),
\end{align}
where ${\bm x}_n \in M$ is a state vector at time step $n$ in the state space $M$ and ${\bm F}:M\to M$ represents the time evolution of the state vector, the Koopman operator $\hat{K}$ is defined by
\begin{align}
\hat{K}[G]({\bm x})=G\circ {\bm F}({\bm x}), 
\end{align}
where $G : M\to {\mathbb C}$ is an observable and $\circ$ represents composition of functions.
This $\hat{K}$ is a linear operator even if ${\bm F}$ is nonlinear, because
\begin{eqnarray}
\begin{split}
\hat K[aG+bH]({\bm x})&
=(aG+bH)\circ {\bm F}({\bm x})\\
&=a\hat K[G]({\bm x})+b\hat K[H]({\bm x})
\end{split}
\end{eqnarray}
holds for general observables $G$ and $H$, and $a, b \in {\mathbb C}$.

The {\it Koopman eigenfunctions} are observables satisfying
\begin{align}
  \hat{K}[\psi]({\bm x}) S{= \psi \circ F(x)} = \lambda \psi({\bm x}),
\end{align}
where $\lambda \in {\mathbb C}$ is a {\it Koopman eigenvalue} and $\psi$ is an associated Koopman eigenfunction.
Assuming that the Koopman eigenfunctions span the function space of observables, we can decompose a general observable $G:M\to{\mathbb C}$ as
\begin{align}
  \label{koopmanmode}
  G({\bm x}) = 
  \sum_{j} g_j \psi_{j}({\bm x}),
\end{align}
where $\lambda_{j}$ is the $j$th eigenvalue, S{$\psi_{j}({\bm x})$} is {the} $j$th Koopman eigenfunction, and $g_j$ is a coefficient called the {\it Koopman mode} ($j=1, 2, ...$). 

The dimensionality of the Koopman operator $\hat{K}$ corresponds to the cardinality $|M|$ of $M$. Therefore, if $M= {\mathbb R}^N$, $\hat{K}$ is an infinite-dimensional operator, and it is generally not easy to perform an explicit analysis of $\hat{K}$ except for solvable systems~\cite{nathan2018applied,page2018koopman,nakao2020spectral}.
In contrast, ECA are finite-state systems because ECA with $N$ cells have only $2^N$ states.
Therefore, we can span the {entire} function space of observables by using $2^N$ indicator functions~\cite{budivsic2012applied,taga2021ecakoopman}, each of which takes a value {of} $1$ only when the system takes one particular state among the $2^N$ states and $0$ otherwise, {providing} a one-hot representation for the whole system states. Using the indicator functions as the basis, the Koopman operator $\hat{K}$ for ECA can be represented as a finite $2^N \times 2^N$ Koopman matrix $K$~\cite{taga2021ecakoopman}.

The properties of the Koopman operator for ECA have been discussed in our previous work~\cite{taga2021ecakoopman}. Some of the important properties are summarized as follows:
\begin{enumerate}
  	\item The Koopman operator $\hat{K}$ has $2^N$ Koopman eigenvalues.
    \item Koopman eigenvalues are either on the complex unit circle or zero.
    \item Koopman eigenvalues on the complex unit circle correspond to the periodic dynamics of the system. If the system has a $T$-periodic solution, $\hat{K}$ has eigenvalues $\exp(2i\pi n/T)$ where $n=1,2,\ldots, T$.
    \item Zero Koopman eigenvalues correspond to relaxation dynamics. If the system has an orbit that reaches a periodic solution in $U$ steps, $\hat{K}$ has a set of zero generalized eigenvalues of every rank from $1$ to $U$.
\end{enumerate}

These properties follow from the fact that the dynamics of any deterministic finite-state systems
reach a periodic solution within S{a finite number of} time steps by the pigeonhole principle~\cite{taga2021ecakoopman} (we regard stationary solutions also as periodic solutions with period one).
The associated eigenfunctions of the Koopman operator represent the 
characteristic dynamical properties of the system.
In particular, the eigenfunctions associated with the eigenvalue $1$ are invariant under the time evolution; namely, they represent conserved quantities of the system~\cite{taga2021ecakoopman}.

The generalized eigenfunctions associated with the relaxation dynamics are the observables that satisfy the following generalized eigenvalue equation:
\begin{align}
  \hat{K}^s [\psi]({\bm x})=0,
\end{align}
  where $s$ is the rank of the generalized eigenfucntion $\psi$.
We can span the {entire} function space of observables with eigenfunctions and generalized eigenfunctions~\cite{taga2021ecakoopman}.
In what follows, to simplify our discussion, we focus only on the asymptotic periodic orbits of ECA and do not consider generalized eigenfunctions
associated with the relaxation dynamics.

\section{Methods of dynamic mode decomposition}
\subsection{Standard DMD}
Dynamic mode decomposition (DMD) is a data-driven method that can approximate spectral properties of the Koopman operator from time-series data observed from dynamical systems~\cite{schmid2010dynamic,tu2013dynamic,williams2015edmd,kutz2016dynamic}.
It is known that, if DMD works appropriately, we can reproduce the dynamics and Koopman eigenvalues of the system from observed time-series data.

Standard DMD seeks the best linear approximation of the evolution of the state vector.
We assign $1$ and $-1$ to black and white, respectively, and use a binary vector ${\bm x} = (x^{(1)}, \ldots, x^{(N)})^{\top} \in \{1, -1\}^{N\times 1}$ to represent the system state, where the $i$th component of ${\bm x}$ corresponds to the state of the $i$th cell of the ECA, and S{$\top$ denotes the transposition}.
We note that $-1$ is used instead of $0$ to represent a white cell in the numerical analysis to avoid the all-zero vector.
In the following analysis, we generally assume ${\bm x} \in {\mathbb R}^{N \times 1}$.

We define a data matrix 
\begin{align}
    X&=({\bm x}_0\ {\bm x}_1\ \ldots\ {\bm x}_{n-1})
    \ {\in {\mathbb R}^{N \times n}}
\end{align}
consisting of $n$ consecutive observed state vectors $\{ {\bm x}_i \}_{i=0, 1, ..., n-1}$, and similarly 
\begin{align}
    Y&=({\bm x}_1\ {\bm x}_2\ \ldots\ {\bm x}_n)
    \ {\in {\mathbb R}^{N \times n}}
\end{align}
consisting of $\{ {\bm x}_i \}_{i=1, ..., n}$, where the time-series data in $Y$ are one step ahead of those in $X$.
The standard {\it Exact DMD} algorithm~\cite{tu2013dynamic} is essentially a spectral analysis of the matrix 
\begin{align}
A=YX^{+} {\in {\mathbb R}^{N \times N}},
    \label{dmdmatrix}
\end{align}
where $X^{+} { \in {\mathbb R}^{n \times N}}$ 
is the Moore-Penrose pseudo-inverse of $X$.
We call this matrix $A$ the {\it DMD matrix}.

We note that the Exact DMD algorithm usually considers dominant components of $YX^{+}$ 
to reduce the computational cost for large systems~\cite{kutz2016dynamic}. 
For ECA with finite states {and} moderate $N$, we can numerically analyze the DMD matrix $A = YX^{+}$ 
without using singular value decomposition to extract the dominant components.
We refer to the above method without dimensionality reduction as the {\it standard DMD} in what follows.

The DMD matrix $A$ minimizes $||AX-Y||$, where the Frobenius norm of a matrix $Z \in {\mathbb R}^{N \times n}$ is defined by $|| Z || = (\sum_{i=1}^N \sum_{j=1}^n |Z_{ij}|^2)^{1/2}$. 
Thus, $A$ is the matrix that gives the best linear approximation of the one-step time evolution of the state vectors included in the data matrix $X$. This linear approximation does not hold for nonlinear systems in general, of course, but it is expected that this gives a reasonable approximation if the dimensionality of the state vector is high enough.
We also note that, even if the time-series data are generated from nonlinear systems, $Y = A X$ can still hold when the matrices $X$ and $Y$ satisfy the linear consistency condition (see below)~\cite{tu2013dynamic}.

If the DMD matrix $A$ can exactly reproduce the dynamics of the observed state vectors, namely, if
\begin{align}
  \label{DMD evolution}
A{\bm x}_n={\bm x}_{n+1}
\end{align}
we can construct the Koopman eigenfunction $\psi$
as
\begin{align}
\label{eigenfunction}
\psi({\bm x}) = \sum_{i=1}^N \phi^{(i)}x^{(i)}=({\bm \phi}, {\bm x}),
\end{align}
where $({\bm a}, {\bm b}) = \sum_{j=1}^N a_j b_j$ is a scalar product of two vectors ${\bm a} = (a_1, ..., a_N)$ and ${\bm b} = (b_1, ..., b_N)$,
$\lambda \in {\mathbb C}$ is an eigenvalue of the transposed matrix $A^{\top}$ of 
$A$,
${\bm \phi} = (\phi^{(1)}, \ldots, \phi^{(N)})^{\top} \in {\mathbb C}^{N \times 1}$ is the associated eigenvector of $A^{\top}$, i.e., $A^{\top} {\bm \phi} = \lambda {\bm \phi}$, 
and $x^{(i)}$ is the $i$th element of ${\bm x}$, i.e., ${\bm x} = (x^{(1)}, ..., x^{(N)})^{\top} \in {\mathbb R}^{N \times 1}$.
Indeed,
\begin{align}
  \hat{K}[ \psi]({\bm x})
  &=
  \psi(A {\bm x}) 
  = \sum_{i=1}^N \phi^{(i)} (A {\bm x})^{(i)} = \sum_{i=1}^N \phi^{(i)} \sum_{j=1}^N A_{ij} x^{(j)}
  \cr
  &
  = \sum_{j=1}^N \left( \sum_{i=1}^N A^{\top}_{ji} \phi^{(i)} \right) x^{(j)} 
  = \sum_{j=1}^N \lambda \phi^{(j)} x^{(j)}
  = \lambda \psi({\bm x}),
  \end{align}
thus $\psi({\bm x})$ is a {\it Koopman eigenfunction} of the system associated with the Koopman eigenvalue $\lambda$.
We also introduce the eigenvector of $A$, called the {\it dynamic mode}, as ${\bm v} = (v^{(1)}, \ldots, v^{(N)})^{\top} \in {\mathbb C}^{N \times 1}$ satisfying $A {\bm v} = \lambda {\bm v}$.

Here, we consider the conditions for the state vectors in the data matrix to satisfy Eq.~(\ref{DMD evolution}).
Let $S$ be a set of state vectors included in the given time series data of ECA used for constructing
the data matrix $X$. 
For the DMD matrix $A$ to reproduce the dynamics of the state vectors, it is enough that $S$ is an invariant set of the ECA dynamics and $Y=AX$ is satisfied.
This is because if every state in $S$ satisfies Eq.~(\ref{DMD evolution}), then S{all states originated} from any initial state ${\bm x}_0$ in $S$ are contained in $S$ and satisfy Eq.~(\ref{DMD evolution}). This discussion is summarized {in} the following lemma.
\begin{lemma}
  \label{timeEvolution}
  If the following conditions are satisfied, the DMD matrix $A$, constructed from data matrices $X$ and $Y$, can reproduce the dynamics of the state vectors in the set of state vectors $S$ included in the given time series.
  \begin{enumerate}
    \item {$S$ is an invariant set of the dynamics.}
    \item {$Y=AX$}
  \end{enumerate}
\end{lemma}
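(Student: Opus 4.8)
The plan is to read the algebraic hypothesis $Y=AX$ column by column, deduce that the linear map $A$ acts on every state vector occurring in the data exactly as the ECA update rule ${\bm F}$ does, and then propagate this one-step agreement forward along each orbit by induction, using condition~1 to guarantee that the orbit never leaves the set on which $A$ and ${\bm F}$ coincide.

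First I would unpack the matrix identity. With $X=({\bm x}_0\ {\bm x}_1\ \cdots\ {\bm x}_{n-1})$ and $Y=({\bm x}_1\ {\bm x}_2\ \cdots\ {\bm x}_n)$, the equality $Y=AX$ is equivalent to the $n$ vector equations $A{\bm x}_i={\bm x}_{i+1}$, $i=0,1,\ldots,n-1$. Since the time series is generated by the cellular automaton, ${\bm x}_{i+1}={\bm F}({\bm x}_i)$, so $A$ reproduces the one-step evolution of each of ${\bm x}_0,\ldots,{\bm x}_{n-1}$, that is, of every state belonging to $S$. Condition~1 then closes the construction: since $S$ is invariant, the terminal data point ${\bm x}_n={\bm F}({\bm x}_{n-1})$ again lies in $S$, hence coincides with one of the earlier states, so $A{\bm x}={\bm F}({\bm x})$ holds for \emph{all} ${\bm x}\in S$.

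The induction is then immediate. Fix any ${\bm x}_0\in S$. By invariance, ${\bm F}^k({\bm x}_0)\in S$ for every $k\ge 0$, so I claim $A^k{\bm x}_0={\bm F}^k({\bm x}_0)$ for all $k$: the case $k=0$ is trivial, and if it holds for $k$, then
\begin{align}
A^{k+1}{\bm x}_0 = A\bigl(A^k{\bm x}_0\bigr) = A\bigl({\bm F}^k({\bm x}_0)\bigr) = {\bm F}\bigl({\bm F}^k({\bm x}_0)\bigr) = {\bm F}^{k+1}({\bm x}_0),
\end{align}
where the third equality uses that ${\bm F}^k({\bm x}_0)\in S$ together with the agreement of $A$ and ${\bm F}$ on $S$ established above. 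Thus the DMD matrix reproduces the full trajectory emanating from any state in $S$, which is the assertion of the lemma.

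The step I expect to be the only real subtlety is the extension of the column-wise relation to all of $S$: a priori $Y=AX$ constrains $A$ only on the columns of $X$, and it is precisely the invariance hypothesis---used to show that the last observed state is the ${\bm F}$-image of a state already present in the data---that removes the remaining gap and lets the induction proceed without interruption. Everything else is bookkeeping with the definitions of $X$, $Y$, and the least-squares solution $A=YX^{+}$, for which $Y=AX$ is just the statement that the Frobenius-norm residual $\|AX-Y\|$ vanishes.
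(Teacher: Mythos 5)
Your proof is correct and follows essentially the same route as the paper's own (much terser) justification: reading $Y=AX$ column by column to get $A{\bm x}_i={\bm F}({\bm x}_i)$ on every state in $S$, then using the invariance of $S$ to propagate this one-step agreement along the whole orbit by induction. Your write-up merely makes explicit the induction that the paper compresses into a single sentence, so there is nothing to add or correct.
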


If the DMD matrix $A$ can describe the dynamics, we can construct the Koopman eigenfunctions from the eigenvectors of $A^{\top}$ as in Eq.~(\ref{eigenfunction}) and all the corresponding eigenvalues (except for the irrelevant zero eigenvalues discussed below) of $A$ are Koopman eigenvalues.

We note that, in Ref.~\cite{tu2013dynamic}, Tu {\it et al.} proved that the necessary and sufficient condition for the second assumption, $Y = A X$, is that the matrices $X$ and $Y$ satisfy the `linear consistency' condition, {specifically}, whenever $X{\bm c}=0${,} then $Y{\bm c}=0${,} where ${\bm c} \in {\mathbb R}^N$.
  Here, we simply assume the condition $Y=AX$ and additionally require the existence of an invariant set $S$ of the dynamics. {This is} because we are discussing the reproducibility of the dynamics of finite-state ECA by DMD, {particularly} their periodic orbits. {Hence,} the observed dynamics should be contained in an invariant set $S$ to be reproducible.
In the case that only some elements but not the whole of the state vector
satisfy the linear consistency, we can reproduce the dynamics and associated Koopman eigenvalues only for those elements~\cite{tu2013dynamic}.

In the following discussion, we assume that the given orbit is an asymptotic periodic orbit S{after a sufficiently long transient} in the numerical analysis. In this case, $N$ independent eigenvalues and eigenvectors of $A$ {exist}, consisting of $T'$ eigenvectors associated with non-zero eigenvalues and $N-T'$ eigenvectors associated with zero eigenvalues (see Appendix A). 
Correspondingly, the matrix $A^\top$ also possesses $N$ independent eigenvalues and eigenvectors.
The eigenvectors of $A$ and $A^{\top}$ can be bi-orthonormalized to satisfy 
\begin{align}
({\bm \phi}_{j},  {\bm v}_{k}) = 
\sum_{i=1}^{N} \phi_{j}^{(i)} v_{k}^{(i)} =
\delta_{j, k},
\end{align}
where ${\bm \phi}_{j}$ is the $j$th eigenvector of $A^\top$ with the eigenvalue $\lambda_j\in {\mathbb C}$ and ${\bm v}_{k}$ is the $k$th eigenvector of $A$ with the eigenvalue $\lambda_k\in {\mathbb C}$.
With this normalization, the {matrices}
$\Phi=({\bm \phi}_{1},\ldots,{\bm \phi}_{N})^{\top}$ and $V=({\bm v}_{1},\ldots,{\bm v}_{N})$
satisfy $\Phi V=I$, namely, $\Phi=V^{-1}$ and thus $V \Phi = I$.
Therefore, the following orthogonality holds as well:
\begin{align}
  \sum_{i=1}^{N} \phi_{i}^{(j)} v_{i}^{(k)} =
  \delta_{j, k}.
\label{biorthogonality2}
\end{align}

We can decompose an observable $X^{(i)}({\bm x})$  $(i=1, ..., N$), which gives the $i$th element $x^{(i)}$ of the state vector ${\bm x}${,} as $X^{(i)}({\bm x}) = x^{(i)}$, by
\begin{align}
  \label{dynamicmode}
  X^{(i)}({\bm x}) = \sum_{j=1}^{N} v^{(i)}_{j} \psi_{j}({\bm x}),
\end{align}
namely, the Koopman mode is given by ${\bm v}_{j}$.
Indeed, 
\begin{align}
X^{(i)}({\bm x}) &=
  \sum_{j=1}^{N} v_{j}^{(i)}\psi_{j}({\bm x})
  =\sum_{j=1}^{N} \sum_{k=1}^{N} {v}_{j}^{(i)}\phi^{(k)}_{j}x^{(k)}
  \cr
  &=\sum_{k=1}^{N} \left( \sum_{j=1}^{N} \phi_{j}^{(k)} v_{j}^{(i)} \right) x^{(k)} = \sum_{k=1}^{N}\delta_{i,k}x^{(k)}=x^{(i)},
  \cr
\end{align}
where we used Eq.~(\ref{biorthogonality2}).
We can then express the $i$th element $x_n^{(i)}$ of ${\bm x}_n$ at time step $n$ as
\begin{align}
  x_n^{(i)} 
  &= X^{(i)}({\bm x}_n)
  = \hat{K}^n[ X^{(i)} ]({\bm x}_0) 
= \sum_{j=1}^N v_{j}^{(i)} \hat{K}^n [ \psi_{j} ] ({\bm x}_0)  \cr
  &= \sum_{j=1}^N v_{j}^{(i)} \lambda_j^n \psi_{j}({\bm x}_0) = \sum_{j=1}^N ({\bm \phi}_{j},{\bm x}_0)  \lambda_j^n v_{j}^{(i)},
\end{align}
where ${\bm x}_0$ is the initial state of the system.
In vector form, we obtain
\begin{align}
  \label{vectorform}
  {\bm x}_n 
  &=
\sum_{j=1}^N {\bm v}_{j}\psi_{j}({\bm x}_n) = \sum_{j=1}^N {\bm v}_{j} \hat{K}^n[\psi_{j}]({\bm x}_0)\cr
  &=
   \sum_{j=1}^N ({\bm \phi}_{j}, {\bm x}_0) \lambda_j^n {\bm v}_{j}.
\end{align}
Thus, the periodic dynamics of ECA can be represented as a superposition of linear Koopman modes.

For general state vectors, Eq.~(\ref{DMD evolution}) may not hold 
because the dynamics of ECA is not linear,
and Eqs.~(\ref{eigenfunction}-\ref{vectorform}) may hold only approximately.
However, it can hold for a specific set of state vectors in the given orbit as discussed in Ref.~\cite{tu2013dynamic}.

We note that all zero eigenvalues obtained by DMD with the periodic time-series data are not relevant Koopman eigenvalues. {This is} because 
Eq.~(\ref{eigenfunction}) gives a zero-valued function when ${\bm \phi}_j$ is an eigenvector of $A^{\top}$ with the eigenvalue $\lambda_j = 0$. 
This can be shown as follows.
For the case that the given orbit is $T$-periodic and does not contain relaxation dynamics, the column space of $X$ can be constructed as the linear combination of the eigenvectors of $A$ associated with non-zero eigenvalues (Appendix A), which are orthogonal to ${\bm \phi}_{j}$ with $\lambda_j=0$. 
Therefore, the function constructed by Eq.~(\ref{eigenfunction}) with $\lambda_j=0$ gives a zero-valued function, i.e.,
\begin{align}
  \psi_{j}({\bm x})=({\bm \phi}_{j}, {\bm x})=0
\end{align} 
for ${\bm x} \in S$.
Thus, we can obtain only $N'\ (\leq N)$ sets of Koopman eigenvalues $\lambda_1, ..., \lambda_{N'}$ and eigenfunctions 
$\psi_{1}, ..., \psi_{{N'}}$ out of the $2^N$ sets from $A$.
Namely, Koopman eigenfunctions obtained from $A^{\top}$ are only a part of the Koopman eigenfunctions associated with a given orbit in general.

For ECA, we can always construct an invariant set $S$ in principle, because all states eventually converge to a periodic orbit (including a stationary state) within finite steps.
However, $Y=AX$ is generally not satisfied, e.g., when the period of the given orbit is longer than $N$.
Therefore, the result of standard DMD is only an approximation to the Koopman analysis in general.
This {limits} the reproducibility of the Koopman eigenvalues by standard DMD, as we will see in Sec. IV. 

\subsection{Hankel DMD}

As we will show in Sec. IV, standard DMD cannot always reproduce the system dynamics and Koopman eigenvalues. 
This is {due to} the insufficient dimensionality of the DMD matrix. 
For a finite-state system, we can always include all state vectors 
along a given orbit in the data matrices, in principle.
However, because of the limited dimensionality of the DMD matrix $A$, standard DMD cannot reproduce more than $N$ eigenvalues, while ECA can possess up to $2^N$ Koopman eigenvalues.
Thus, we need to extend standard DMD to obtain more eigenvalues.

Hankel DMD (HDMD) is an extension of standard DMD~\cite{brunton2017havok,arbabi2017hdmd},
which uses enlarged, delay-embedded data matrices (Hankel matrices) defined as
\begin{eqnarray}
  X_H &=& \left(
    \begin{array}{cccc}
      {\bm x}_0 & {\bm x}_1 & \cdots & {\bm x}_{n-1} \\
      {\bm x}_1 & {\bm x}_2 & \cdots & {\bm x}_{n} \\
      \vdots & \vdots & \ddots & \vdots \\
      {\bm x}_p & {\bm x}_{p+1} & \cdots & {\bm x}_{n+p-1}
    \end{array}
  \right),
  \cr \cr
  Y_H &=& \left(
    \begin{array}{cccc}
      {\bm x}_1 & {\bm x}_2 & \cdots & {\bm x}_{n} \\
      {\bm x}_2 & {\bm x}_3 & \cdots & {\bm x}_{n+1} \\
      \vdots & \vdots & \ddots & \vdots \\
      {\bm x}_{p+1} & {\bm x}_{p+2} & \cdots & {\bm x}_{n+p}
    \end{array}
  \right),
 \end{eqnarray}
where $p \geq 1$ is the maximum time delay and $X_H, Y_H \in {\mathbb R}^{(p+1)N\times n}$.
The HDMD matrix $A_H = Y_H X_H^{+} \in {\mathbb R}^{(p+1)N \times (p+1)N}$ is defined {from these matrices and is} used for the DMD analysis instead of the matrix $A$ defined in Eq.~(\ref{dmdmatrix}).
Since up to $(p+1)N$ eigenvalues can be evaluated from $A_H$, the reproducibility of the eigenvalues is improved {compared to} standard DMD.

It has been shown that the HDMD eigenvalues converge to the Koopman eigenvalues if the Hankel matrix contains sufficiently many time-delayed components~\cite{arbabi2017hdmd}. 
For finite-state systems including ECA, we can prove the reproducibility of the dynamics of the state vectors by HDMD with a linear-algebraic argument. 

\begin{lemma}
    \label{Hankel}
  Define $A_H = Y_H X_H^{+}$. Then, the Hankel matrices $X_H$ and $Y_H$ satisfy $Y_H=A_H X_H$ if $X_H$ and $Y_H$ include sufficiently many time-delayed components, specifically, more than the number of independent states in the given orbit.
\end{lemma}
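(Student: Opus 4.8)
The plan is to show that once the Hankel matrices carry enough delay copies, the column space of $X_H$ becomes an invariant subspace for the shift dynamics, at which point the defining property $A_H = Y_H X_H^+$ forces $Y_H = A_H X_H$ exactly. The key structural fact I would use is Lemma~\ref{timeEvolution}: it suffices to exhibit an invariant set, and then the identity $Y_H = A_H X_H$ is equivalent (by Tu \emph{et al.}~\cite{tu2013dynamic}) to linear consistency of $X_H$ and $Y_H$, i.e.\ $X_H {\bm c} = 0 \Rightarrow Y_H {\bm c} = 0$ for ${\bm c} \in {\mathbb R}^n$. So the real content is: \emph{with enough time delays, every null-space relation among the columns of $X_H$ is preserved under shifting forward by one time step.}

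First I would set up notation: let the given orbit be periodic of period $T$ with $T'$ independent state vectors ${\bm x}_0, \dots, {\bm x}_{T'-1}$ spanning the relevant subspace (here I am using the assumption, as in the standard-DMD section, that the orbit is asymptotically periodic, so the column space of $X$ is spanned by at most $T'$ independent states). A column of $X_H$ is the stacked vector $({\bm x}_i, {\bm x}_{i+1}, \dots, {\bm x}_{i+p})^\top$; the corresponding column of $Y_H$ is $({\bm x}_{i+1}, \dots, {\bm x}_{i+p+1})^\top$, i.e.\ the same object shifted by one. Second, I would argue that when $p+1$ exceeds the number of independent states in the orbit (the quantity the lemma calls ``more than the number of independent states''), a linear relation $\sum_k c_k ({\bm x}_{i_k}, \dots, {\bm x}_{i_k+p})^\top = 0$ among columns of $X_H$ can only hold if it already holds ``level by level'' in a way compatible with the periodic recurrence — concretely, because the delay window is long enough to see a full period of the orbit, two distinct columns of $X_H$ are equal precisely when the underlying time indices are congruent mod $T$, and more generally the null space of $X_H$ is exactly the null space generated by these periodicity coincidences. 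Third, since shifting all time indices by $+1$ permutes the orbit points and preserves congruences mod $T$, the same coefficient vector ${\bm c}$ annihilates $Y_H$; hence linear consistency holds, hence $Y_H = A_H X_H$. Combined with the trivial observation that the set of stacked delay vectors along the orbit is invariant under the one-step shift, Lemma~\ref{timeEvolution} then yields reproducibility of the dynamics.

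The main obstacle I anticipate is making precise the claim that ``enough delays'' collapses the null space of $X_H$ to the periodicity-induced relations and nothing more. The subtlety is that columns of $X_H$ could in principle satisfy \emph{accidental} linear dependencies — dependencies not coming from the orbit literally repeating — and one must rule out that such a dependency fails to persist after the forward shift. The natural way around this is to note that a stacked delay vector of length $(p+1)N$ with $p+1 > T'$ contains, among its blocks, a complete fundamental domain of the orbit: if two such stacked vectors agree, all their blocks agree, so the orbit segments coincide entirely and therefore continue to coincide after one more step; and any linear relation among the columns decomposes along the (finitely many) distinct orbit points, reducing to independence of those finitely many state vectors, which is preserved under the bijective time shift on the orbit. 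I would also need to handle the bookkeeping that $n$ (the number of columns) and the ranges of indices are taken large enough that every orbit point, and its successor, actually appears — this is where the phrase ``sufficiently many time-delayed components'' does its work, and I would state the precise threshold ($p+1$ strictly greater than the number of independent states along the orbit, with $n$ at least one full period) explicitly in the proof.
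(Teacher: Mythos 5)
Your overall route---establishing linear consistency ($X_H {\bm c} = 0 \Rightarrow Y_H {\bm c} = 0$) and invoking the equivalence from Tu \emph{et al.}---is a legitimate way to reach $Y_H = A_H X_H$, but it is not the paper's route. The paper works with \emph{row} spaces instead: each row of $Y_H$ is the $n$-step time series of a single cell starting at one of the times $1,\dots,p+1$, and once $p+1$ is at least the number $L$ of states on the periodic orbit, every such row already occurs verbatim as a row of $X_H$; row-space inclusion then gives $Y_H = Y_H X_H^{+} X_H = A_H X_H$ by Lemma~\ref{rowspace}. That argument never needs to characterize the null space of $X_H$ at all, which is exactly where your version runs into trouble.

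The gap sits at the point you yourself flag as the main obstacle: the claim that the null space of $X_H$ is generated by the periodicity coincidences, and the fallback that a relation ``decomposes along the distinct orbit points, reducing to independence of those finitely many state vectors.'' Neither holds. The distinct stacked columns ${\bm z}_i = ({\bm x}_i,\dots,{\bm x}_{i+p})^{\top}$ are in general \emph{not} linearly independent even for $p+1\ge T$: writing $X_H{\bm c}=0$ as $\sum_i c_i{\bm x}_{i+k}=0$ for $k=0,\dots,p$ and passing to the discrete Fourier basis on the cyclic index, ${\bm c}$ lies in the null space whenever its transform is supported on frequencies $j$ for which the vectors ${\bm v}_j=\sum_n\lambda_j^{-n}{\bm x}_n$ of Appendix~A vanish---precisely the situation of ECA~3, 73, and 60 in the paper, where the orbit states span fewer than $T$ dimensions. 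For such ``accidental'' ${\bm c}$ your argument gives no reason that $Y_H{\bm c}=0$. The lemma is still true, but it needs the wrap-around step you never state: for $k=0,\dots,p-1$ the $k$th block of $Y_H{\bm c}$ coincides with the $(k+1)$th block of $X_H{\bm c}=0$, while for $k=p$ the block $\sum_i c_i{\bm x}_{i+p+1}$ equals, by $T$-periodicity and $p+1\ge T$, the block $\sum_i c_i{\bm x}_{i+p+1-T}$ of $X_H{\bm c}=0$. Note also that this fix requires $p+1$ to reach the number of \emph{distinct} states on the orbit (the $L$ of the paper's proof), not merely the number of linearly independent ones, so your stated threshold $p+1>T'$ is not the one the argument actually supports.
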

This lemma follows from the {next} lemma. 
\begin{lemma}
  \label{rowspace}
Let $X$ and $Y$ be $n\times m$ matrices. The following statements are equivalent.
\begin{enumerate}
  \item {The row space of $Y$ is included in the row space of $X$.}
  \item {$Y = Y X^{+} X$ holds.}
\end{enumerate}
\end{lemma}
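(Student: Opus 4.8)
The plan is to prove the equivalence of the two statements about the row spaces via the fundamental properties of the Moore-Penrose pseudo-inverse, in particular the fact that $X^{+}X$ is the orthogonal projector onto the row space of $X$ (equivalently, onto $(\ker X)^{\perp}$). Concretely, $P := X^{+}X$ is symmetric, idempotent, and satisfies $XP = X$ and $\mathrm{ran}(P^{\top}) = \mathrm{ran}(P) = \mathrm{row}(X)$, where I identify the row space of $X$ with the column space of $X^{\top}$ in ${\mathbb R}^{m}$. So the identity $Y = YX^{+}X = YP$ says exactly that each row of $Y$, viewed as a vector in ${\mathbb R}^{m}$, is fixed by the projector $P$ onto $\mathrm{row}(X)$.

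First I would establish the direction $(2) \Rightarrow (1)$: if $Y = YP$, then every row of $Y$ lies in $\mathrm{ran}(P) = \mathrm{row}(X)$, since right-multiplication by $P$ sends any row vector into the range of $P$ (acting on the right, $P$ sends ${\mathbb R}^{1\times m}$ onto $\mathrm{row}(P) = \mathrm{row}(X)$ because $P$ is symmetric). Hence $\mathrm{row}(Y) \subseteq \mathrm{row}(X)$. Then for $(1) \Rightarrow (2)$: suppose $\mathrm{row}(Y) \subseteq \mathrm{row}(X)$. Write the $i$th row of $Y$ as ${\bm y}_i^{\top}$; then ${\bm y}_i \in \mathrm{row}(X)$ (as a column vector after transposing), and since $P$ acts as the identity on $\mathrm{row}(X)$ we get ${\bm y}_i^{\top} P = {\bm y}_i^{\top}$ for every $i$, i.e.\ $YP = Y$, which is the claim.

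The routine facts I would invoke without reproof are the four Penrose conditions and the two standard consequences $X^{+}X = (X^{+}X)^{\top}$ and $X(X^{+}X) = X$, together with the characterization $\mathrm{ran}(X^{+}X) = \mathrm{ran}(X^{\top})$; I would cite a standard linear-algebra reference or simply state them, since they are textbook. One subtlety worth spelling out carefully is the bookkeeping between ``row space'' and ``column space of the transpose,'' and the fact that when $P$ multiplies $Y$ on the \emph{right}, the relevant projector identity is on $\mathrm{row}(X) \subseteq {\mathbb R}^{m}$ rather than on the column space of $X$; keeping the transposes straight is the only place an error could creep in.

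I do not expect a genuine obstacle here—the statement is essentially a repackaging of the projector property of $X^{+}X$—so the ``hard part'' is purely expository: making the row-space/transpose identification explicit enough that the reader sees $Y = YX^{+}X$ is literally the statement that the rows of $Y$ are $P$-invariant. Once Lemma~\ref{rowspace} is in hand, Lemma~\ref{Hankel} follows by checking that when $X_H$ and $Y_H$ contain more time-delayed blocks than the number of independent states along the orbit, the shift structure of the Hankel matrices forces $\mathrm{row}(Y_H) \subseteq \mathrm{row}(X_H)$, and then applying Lemma~\ref{rowspace} with $X = X_H$, $Y = Y_H$ gives $Y_H = Y_H X_H^{+} X_H = A_H X_H$.
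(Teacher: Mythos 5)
Your proposal is correct and follows essentially the same route as the paper: the paper's one-line proof invokes exactly the fact that a row vector ${\bm v}$ satisfies ${\bm v} = {\bm v}X^{+}X$ if and only if ${\bm v}$ lies in the row space of $X$, which is the projector property of $P = X^{+}X$ that you spell out in detail. Your version is simply a more explicit expansion of the same argument, with the transpose bookkeeping made visible.
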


\begin{proof}(Lemma~\ref{rowspace})
  From the property of the pseudo-inverse matrix, ${\bm v}={\bm v}X^{+} X$ holds if and only if the row vector ${\bm v}$ is included in the row space of $X$. 
\end{proof}

Using Lemma~\ref{rowspace}, we can prove Lemma~\ref{Hankel} as follows.

\begin{proof}(Lemma~\ref{Hankel})
Assume that the number of states on the given orbit is $L$. If $p+1 \geq L$, any $n$-step time series of the state vectors on the given orbit is included in the set of row vectors of $X_H$.
Since all row vectors of $Y_H$ are included in the set of $n$-step time series of the state vectors on the given orbit, all row vectors of $Y_H$ are included in the set of row vectors of $X_H$.
Therefore, $Y_H=A_H X_H$ holds from Lemma~\ref{rowspace} if sufficiently many time-delayed components are used for $X_H$ and $Y_H$.
\end{proof}

We note that $Y_H=A_H X_H$ is equivalent to the linear consistency~\cite{tu2013dynamic}. {Thus,} Lemma~\ref{Hankel} proves that we can construct the data matrices $X_H$ and $Y_H$ that satisfy linear consistency by using sufficiently many time-delayed components.

We can always construct the data matrices with an invariant set $S$ by using sufficiently long time-series data observed from the given orbit starting from an initial condition ${\bm x}_0$. 
From Lemma~\ref{timeEvolution} and Lemma~\ref{Hankel}, we obtain the following theorem immediately.
\begin{theorem}
  \label{Hankel DMD}
  HDMD matrices with sufficiently long time-series data and sufficiently many time-delayed components can describe the dynamics of ECA. 
\end{theorem}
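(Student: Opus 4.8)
The plan is to obtain Theorem~\ref{Hankel DMD} as an immediate consequence of Lemma~\ref{timeEvolution} and Lemma~\ref{Hankel}: those two lemmas already supply the two hypotheses behind the phrase ``$A_H$ reproduces the dynamics,'' so the only real work is (i) to make precise what ``dynamics of ECA'' means in the delay-embedded (lifted) state space, and (ii) to check that the invariance hypothesis and the linear-consistency hypothesis can be met \emph{simultaneously} by one choice of data matrices.

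First I would fix a rule and an initial condition ${\bm x}_0$. Since an ECA on $N$ cells is a deterministic map on a set of $2^N$ states, the orbit $\{{\bm x}_0,{\bm x}_1,\ldots\}$ is eventually periodic by the pigeonhole principle; discarding a sufficiently long transient (consistent with the paper's stated focus on asymptotic periodic orbits) places us on a periodic orbit visiting $L$ distinct states. I would then take Hankel matrices $X_H,Y_H$ built from a stretch of this orbit that is long enough (many columns $n$) and deep enough (delay $p$ with $p+1\ge L$). The natural state variable for HDMD is the stacked delay vector ${\bm X}_k=({\bm x}_k^{\top},{\bm x}_{k+1}^{\top},\ldots,{\bm x}_{k+p}^{\top})^{\top}$, and the set $S$ of such vectors appearing as columns of $X_H$ is invariant under the one-step shift ${\bm X}_k\mapsto{\bm X}_{k+1}$: shifting a phase of a periodic orbit gives another phase of the same orbit, which (for $n$ large enough) already sits among the columns. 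This verifies hypothesis~(1) of Lemma~\ref{timeEvolution} in the lifted space.

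Next I would invoke Lemma~\ref{Hankel}: because $p+1\ge L$, every block row of $Y_H$ is a cyclic shift of a block row of $X_H$, so the row space of $Y_H$ is contained in that of $X_H$, and Lemma~\ref{rowspace} gives $Y_H=Y_H X_H^{+}X_H=A_H X_H$, which is hypothesis~(2). With both hypotheses in hand, Lemma~\ref{timeEvolution} yields $A_H{\bm X}_k={\bm X}_{k+1}$ for every delay vector on the orbit; reading off the block corresponding to ${\bm x}_{k+1}$ shows that $A_H$ reproduces the ECA update along the orbit. From here one repeats, verbatim, the standard-DMD construction of Eqs.~(\ref{eigenfunction}-\ref{vectorform}) with $A_H$ in place of $A$ to get the Koopman eigenfunctions and the modal decomposition, now with up to $(p+1)N$ nonzero eigenvalues.

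The main obstacle --- really the only subtlety --- is the bookkeeping hidden in ``sufficiently'': one must pick $n$ large enough that the column set of $X_H$ is genuinely closed under the shift (not merely ``almost'' closed at the ends of the observation window) while also picking $p$ with $p+1\ge L$ for the row-space inclusion; since $L$ is finite for any ECA and the orbit is cyclic, both can be arranged and the endpoints of the window are harmless. A secondary point worth one sentence, as in the standard-DMD case, is that the extra zero eigenvalues of $A_H$ are spurious, so ``describe the dynamics'' should be read as a statement about the action of $A_H$ on the relevant (nonzero) invariant subspace rather than about its full spectrum.
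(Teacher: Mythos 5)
Your proposal is correct and follows essentially the same route as the paper: the theorem is obtained by combining Lemma~\ref{timeEvolution} (invariance of the set of delay vectors plus $Y_H=A_HX_H$) with Lemma~\ref{Hankel} (linear consistency guaranteed once $p+1\ge L$), exactly as the paper does. Your additional bookkeeping about the shift-invariance of the column set in the lifted space and the irrelevance of the extra zero eigenvalues merely makes explicit details the paper leaves implicit.
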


Theorem~\ref{Hankel DMD} asserts that HDMD with Hankel matrices constructed from time-delayed components longer than the number of independent state vectors in the given orbit can reproduce the dynamics and thus partially reproduce the spectral properties of the Koopman operator.
We note that Theorem~\ref{Hankel DMD} is valid not only for periodic orbits but also for {orbits that include} relaxation dynamics.
This is a sufficient condition but not a necessary condition{; as} we will show in Sec.~IV~B, in general, we do not need such a large Hankel matrix to reproduce the result of the Koopman analysis.

\subsection{Extended DMD}

As we will show in Sec.~IV~B, if we require all the Koopman eigenvalues associated with a given orbit, 
such as $\lambda_n=\exp(2\pi in/T)$ with every $n=1,\ldots, T$ for the $T$-periodic dynamics, Hankel DMD can still be insufficient.
Therefore, we further consider {the} Extended DMD (EDMD) method for ECA to obtain all the Koopman eigenvalues of the given orbit. 

EDMD is a well-known extension of DMD, which uses a set of nonlinear basis functions for data matrices instead of the state vectors themselves~\cite{williams2015edmd, li2017edmd}.
The basis functions are chosen appropriately so that general observables S{are well approximated by their linear combinations}.
The data matrices $X_E, Y_E \in {\mathbb R}^{m \times n}$ are defined as
\begin{align}
    X_E &= \left(
      \begin{array}{cccc}
        c_1({\bm x}_0) & c_1({\bm x}_1) & \cdots & c_1({\bm x}_{n-1}) \\
        c_2({\bm x}_0) & c_2({\bm x}_1) & \cdots & c_2({\bm x}_{n-1}) \\
        \vdots & \vdots & \ddots & \vdots \\
        c_{m}({\bm x}_0) & c_{m}({\bm x}_1) & \cdots & c_{m}({\bm x}_{n-1})
      \end{array}
    \right),
    \cr
    Y_E &= \left(
      \begin{array}{cccc}
        c_1({\bm x}_1) & c_1({\bm x}_2) & \cdots & c_1({\bm x}_{n}) \\
        c_2({\bm x}_1) & c_2({\bm x}_2) & \cdots & c_2({\bm x}_{n}) \\
        \vdots & \vdots & \ddots & \vdots \\
        c_{m}({\bm x}_1) & c_{m}({\bm x}_2) & \cdots & c_{m}({\bm x}_{n})
      \end{array}
    \right),
   \end{align}
where 
$c_{1}, \ldots, c_{m} : {\mathbb R}^N \to {\mathbb R}$ are generally nonlinear functions.
We then define the EDMD matrix $A_E = Y_E X_E^\dag \in {\mathbb R}^{m \times m}$ and use this for the DMD analysis.

The primary advantage of EDMD is that it can reproduce all Koopman eigenvalues and eigenfunctions of ECA, in principle, while this is not the case for standard DMD or HDMD. By using sufficiently many nonlinear basis functions and {constructing} data matrices containing sufficiently long time-series data, the transpose of the EDMD matrix $A_E$ can always reproduce the matrix representation of the Koopman operator restricted on the set of system states included in the time-series data; the basis with $2^N$ indicator functions~\cite{taga2021ecakoopman} corresponds to the extreme case.
Even if the basis functions are not enough to completely capture the given dynamics of ECA, EDMD is expected to reproduce the Koopman eigenvalues and eigenfunctions more faithfully than standard DMD or HDMD{, provided} the basis functions are appropriately chosen.

\section{Results}

\subsection{Numerical setup}

We now apply three types of DMD methods to ECA.
The observed state vector is 
${\bm x}=(x^{(1)}, \ldots, x^{(N)})^{\top} \in \{1, -1\}^{N\times 1}$.
The dynamics of finite ECA always converge to a periodic orbit after finite steps~\cite{taga2021ecakoopman}. 
We choose a typical initial condition, discard initial relaxation dynamics, and construct the data matrices $X$ and $Y$ using all states on the asymptotic periodic orbit.
{Thus,} we perform DMD using only the time-series data observed from a specific periodic orbit, which is an invariant subset of the state space of the system.

We consider ECA on a lattice of 13 cells with periodic boundary conditions, for which we have calculated all the Koopman eigenvalues of typical periodic orbits from the $2^{13} \times 2^{13}$ Koopman matrix in Ref.~\cite{taga2021ecakoopman}.
We show only the results for some typical orbits of the representative rules that are suitable for demonstrating theoretical results and revealing differences between the three DMD algorithms (see Appendix B for the characteristics of the orbits given by those rules).
It is known that the dynamics can also differ depending on the system size and initial conditions even for the same rule~\cite{nobe2004reversible}, and the following results are only for typical periodic orbits for the illustration purpose. 
We stress, however, that our theoretical arguments regarding the reproducibility of the dynamics and Koopman eigenvalues apply to any rules, orbits, and system sizes.

\subsection{Standard DMD}

We first apply standard DMD to ECA. 
Figures~\ref{fig3},~\ref{fig4}, and~\ref{fig5} show the results for ECA {rules} 4, 184, 3, 73, 15, 54, 30, and 60, respectively. 
S{Here, we chose these rules because they generate typical}
S{orbits suitable for demonstrating the differences in the}
S{performance between the DMD algorithms. Detailed}
S{reasons for choosing these ECA rules are explained in}
S{Appendix B.}
For each rule, the time evolution of the state vector obtained by direct numerical simulations (DNS) of ECA, the time evolution of the state vector predicted by standard DMD, the eigenvalues of the DMD matrix on the complex plane, and DMD modes (except ECA 30 and 60) are shown.

In the numerical analysis, the evolution of the state vector and Koopman eigenvalues are relatively well reproduced by DMD when the period of the system dynamics is less than or equal to $N$, while DMD cannot completely reproduce them when the period is longer than $N$.
Since the dimensionality of the observed state vector ${\bm x}$ is equal to the number of cells $N$, the maximal $\mbox{rank}$ of the DMD matrix $A$ is also $N$. The rank of $A$ is the number of non-zero eigenvalues of $A${, and if it} is smaller than the number of state vectors on the given orbit, DMD fails to reproduce all the Koopman eigenvalues associated with the orbit{, yielding} only {a subset} of them or spurious eigenvalues. 

Figure~\ref{fig3} shows the cases where DMD succeeds in reproducing all the Koopman eigenvalues associated with a given orbit. The dynamics of the state vector are also successfully reproduced by DMD in these cases.
ECA 4 converges to a fixed state (period 1) with Koopman eigenvalue $1$, 
and ECA 184 converges to a periodic orbit of period $T=13$ with Koopman eigenvalues $\lambda_n = \exp(2i\pi n/13)${, where} $n=0,1,...,12$, respectively. For both rules, the Koopman eigenvalues are successfully reproduced by DMD. 

For the rules shown in Fig.~\ref{fig4}, the dynamics of the state vector are also successfully reproduced by DMD, but DMD fails to reproduce some of the Koopman eigenvalues associated with the given orbit. 
For ECA 15, the period is $T=26$, so the rank of $A$ cannot be larger than the period and we cannot obtain all the Koopman eigenvalues associated with the orbit. 
However, we still obtain a part of the Koopman eigenvalues from DMD even though the period $T$ is larger than $N$.

For ECA 3 and 73, even though the period $T$ is less than $13$ in both cases, $\lambda_0=1$ is missing in the case of ECA 3, and $\lambda_1 =\exp(i\pi/3)$ and $\lambda_5=\exp(5i\pi/3)$ are missing in the case of ECA 73.
This is because the observed state vectors from the periodic orbit are not linearly independent and the rank of $A$ is smaller than $T$.
It is interesting to note that, for ECA 73, the system dynamics are divided into 2-periodic dynamics and 3-periodic dynamics as shown in Fig.~\ref{fig4}(b) and the corresponding dynamic modes capture the characteristic spatial structures.

Figure~\ref{fig5} shows the cases {where} DMD fails to reproduce the dynamics. In these cases, we obtain spurious eigenvalues.
For ECA 30 and 60, the periods are $T=260$ and $819$, respectively. Since they are much longer than $N=13$, the dynamics are not reproduced{,} and all the obtained eigenvalues are spurious.
For ECA 54, although the period is $T=4$ and the number of different Koopman eigenvalues is $4$, only one Koopman eigenvalue $\lambda_2=\exp(\pi i)$ is correctly reproduced{,} and some spurious eigenvalues appeared. This is because $Y=AX$ is not satisfied even though the period is less than $N$.
The eigenvalues other than $\lambda_2$ are spurious, but the corresponding dynamic modes appear to show some characteristic spatial structures of the dynamics.

Thus, standard DMD can reproduce some of the Koopman eigenvalues when the dynamics of the observables can be reproduced with the DMD matrix $A$. 
However, even if this condition is satisfied, standard DMD does not always reproduce all the Koopman eigenvalues and, in some cases, it can yield spurious eigenvalues that are not included in the Koopman eigenvalues.
These results are also summarized in Table.~\ref{table:DMD}. 
We note that the time-series data of the system states are taken only from the asymptotic periodic orbit and {do} not include the relaxation dynamics. Therefore, the zero eigenvalues obtained by DMD are not the true Koopman eigenvalues and are irrelevant{,} as discussed in Sec.~III.

\begin{table}[h]
  \caption{Standard DMD for ECA.}
  \label{table:DMD}
  \centering
  \begin{tabular}{lcccc}
    \hline
       ECA & period $T$  & reproducibility of the Koopman analysis  \\
    \hline
    4 & N>T & completely reproduced \\
    184 & N=T & completely reproduced  \\
    3, 73 & N>T & partly reproduced  \\
    15 & N<T & partly reproduced \\
    54 & N>T & spurious eigenvalues are obtained \\
    30, 60  & N<T & spurious eigenvalues are obtained \\
    \hline
  \end{tabular}
\end{table}

\begin{figure}[h]
  \begin{center}
  (a) ECA 4\\
  \includegraphics[width=0.7\linewidth]{./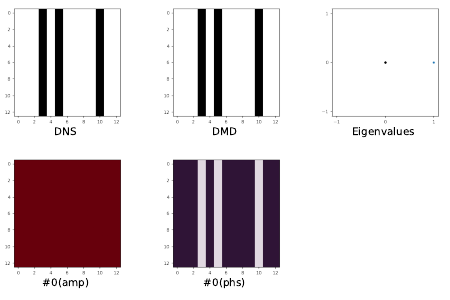}\\
  (b) ECA 184\\
  \includegraphics[width=0.7\linewidth]{./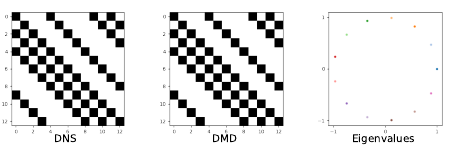}\\
  \includegraphics[width=0.7\linewidth]{./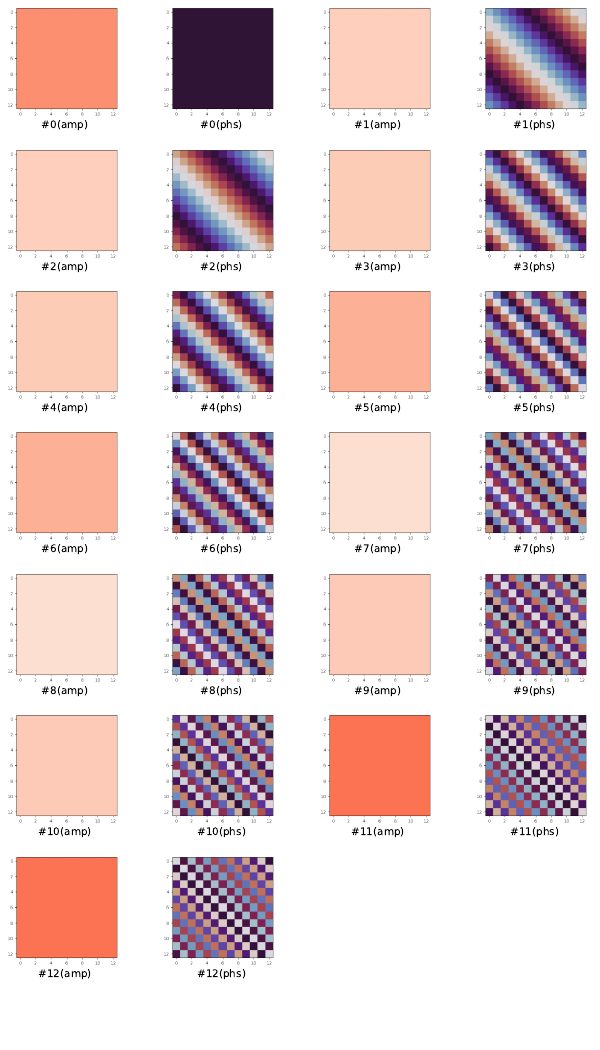}\\
  \includegraphics[width=0.7\linewidth]{./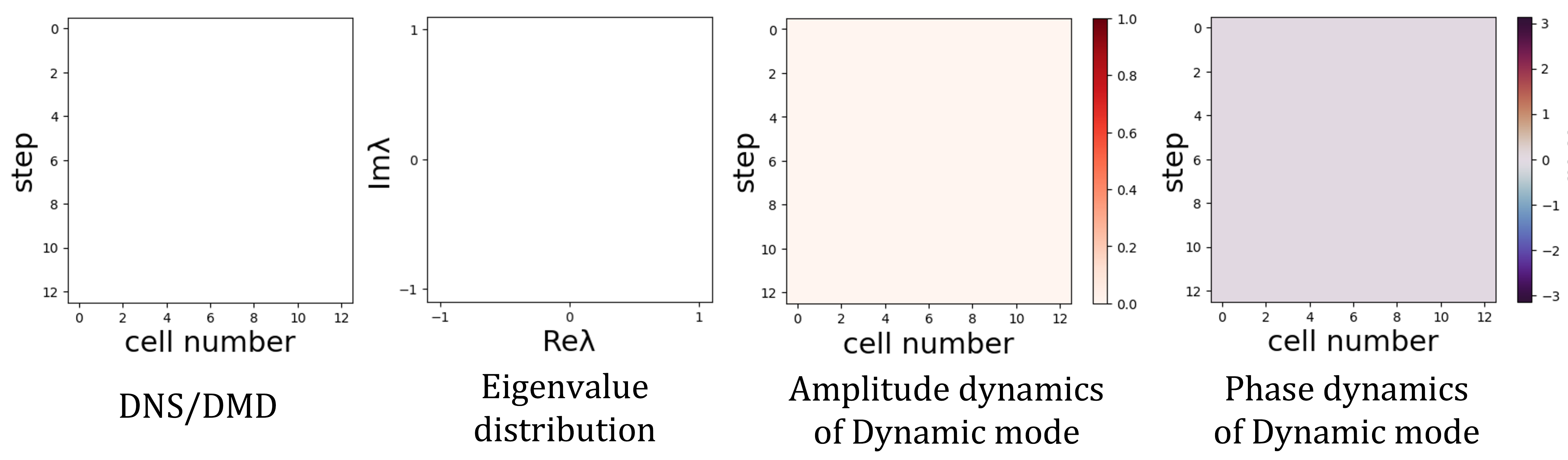}
  \end{center}
  \caption{Standard DMD of ECA on a lattice of $N=13$ cells with periodic boundary conditions. (a) ECA 4 (period $T=1$), (b) ECA 184 ($T=13$).
  The first three figures show the evolution of the state vector obtained by DNS of ECA, the evolution of the state vector predicted by DMD,
  and the eigenvalues of the DMD matrix on the complex plane.
  The rest of the figures show the amplitude and the phase components of all dynamic modes with non-zero eigenvalues. 
  The axes of the individual figures are indicated in the bottom panel. 
  }
  \label{fig3}
  \end{figure}

  \begin{figure}[h]
    \begin{center}
      (a) ECA 3\\
      \includegraphics[width=0.7\linewidth]{./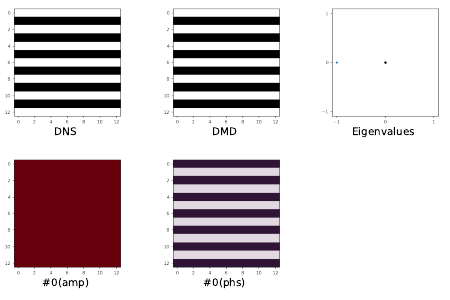}\\
      (b) ECA 73\\
      \includegraphics[width=0.7\linewidth]{./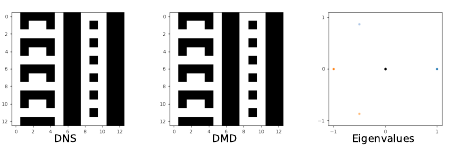}\\
      \includegraphics[width=0.7\linewidth]{./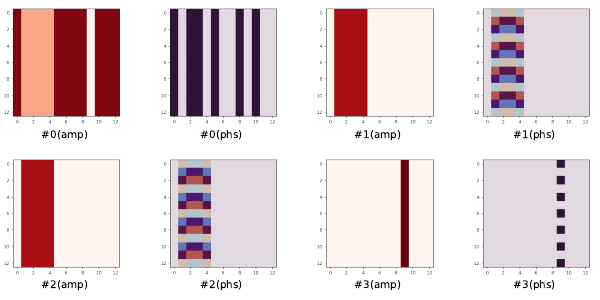}\\
      (c) ECA 15\\
      \includegraphics[width=0.7\linewidth]{./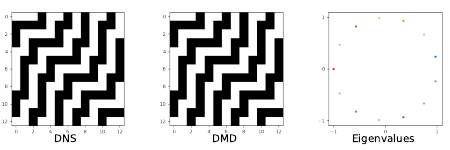}\\
      \includegraphics[width=0.7\linewidth]{./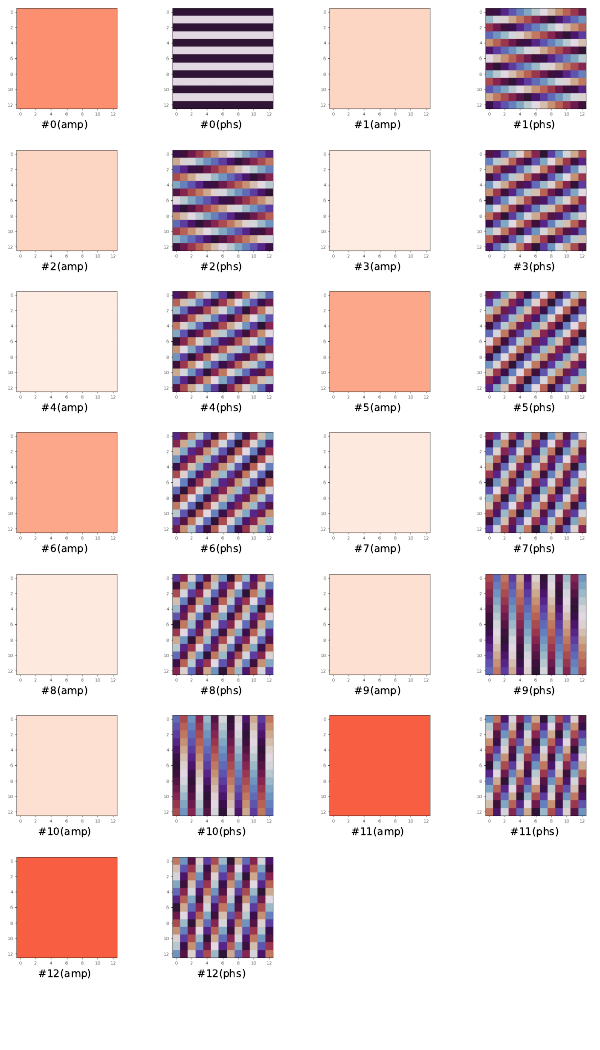}
    \end{center}
    \caption{Standard DMD of ECA on a lattice of $N=13$ cells with periodic boundary conditions. 
    The dynamics are reproduced in these cases, but not all Koopman eigenvalues are reproduced. 
    (a) ECA 3 ($T=2$), (b) ECA 73 ($T=6$), (c) ECA 15 ($T=26$). Each figure is plotted in the same manner as in Fig.~\ref{fig3}.}

    \label{fig4}
  \end{figure}

  \begin{figure}[h]
    \begin{center}
      (a) ECA 30\\
      \includegraphics[width=0.7\linewidth]{./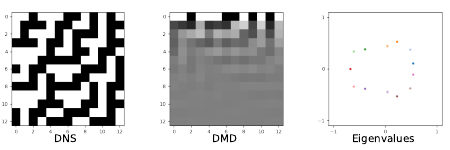}\\
      (b) ECA 60\\
      \includegraphics[width=0.7\linewidth]{./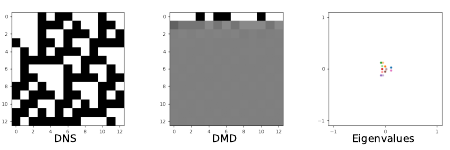}\\
      (c) ECA 54\\
      \includegraphics[width=0.7\linewidth]{./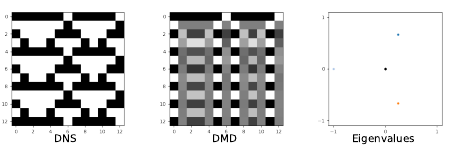}\\
      \includegraphics[width=0.7\linewidth]{./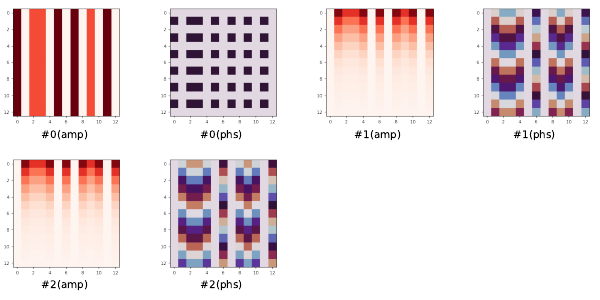}
    \end{center}
    \caption{Standard DMD of ECA on a lattice of $N=13$ cells with periodic boundary conditions. 
    In these cases, the dynamics are not reproduced and spurious Koopman eigenvalues are obtained.
    (a) ECA 30 ($T=260$), ({b}) ECA 60 ($T=819$), and ({c}) ECA 54 ($T=4$). DMD modes are shown only for ECA 54. Each figure is plotted in the same manner as in Fig.{~\ref{fig3}}. 
    }

    \label{fig5}
  \end{figure}

\subsection{Hankel DMD}

Next, we apply Hankel DMD to ECA.
The observed state vector 
${\bm x}=(x^{(1)}, \ldots, x^{(N)})^{\top} \in \{1, -1\}^{N\times 1}$
is the same as in the previous subsection.
The Hankel matrices are constructed using the maximal time delay of $p$ steps. 
It is expected that HDMD with a sufficiently large amount of time-delayed data can reproduce the dynamics and Koopman eigenvalues better than standard DMD.
From Theorem~\ref{Hankel DMD}, 
it is enough to take $p$ as the period of the asymptotic periodic orbit to reproduce the dynamics of the system using HDMD.

Figure~\ref{fig6} shows the evolution of the state vector obtained by DNS of ECA,
the evolution of the state vector predicted by HDMD,
and the eigenvalues of the HDMD matrix on the complex plane for the rules shown in Figs.~\ref{fig4} and~\ref{fig5}.
Note that standard DMD fails to reproduce all the Koopman eigenvalues for these cases.

In contrast to standard DMD, HDMD can predict the evolution of the state vector for all rules shown in Figs.~\ref{fig4} and~\ref{fig5}.
In particular, the evolution of the state vector for ECA 30, 60, and 54, whose dynamics cannot be described by standard DMD, {is} completely reproduced by HDMD.
Also, the Koopman eigenvalues for ECA 30, 60, and 54, which cannot be reproduced by standard DMD, are well reproduced, and no spurious eigenvalues other than $0$ are obtained.
Note here that the zero eigenvalues obtained by HDMD are irrelevant to the Koopman eigenvalues{,} as in the previous case of standard DMD.
For ECA 30 and 54, we can obtain all the Koopman eigenvalues of the given periodic orbit.

For ECA 60, HDMD reproduces most of the eigenvalues, but the number of obtained DMD eigenvalues is $807$, which is smaller than the total number, $819$, of the Koopman eigenvalues associated with the given periodic orbit.
We note that the Koopman eigenvalues $\lambda_n=\exp(2\pi i n/13),n=1,2,...,12$ are missing in this case. This is because the state space 
can be spanned by $807$ state vectors; namely, the state vectors obtained from the periodic orbit are not linearly independent, and if we try to construct corresponding eigenvectors as in Appendix A, we obtain zero vectors.
  These results are summarized in Table.~\ref{table:HDMD}.
\begin{table}[h]
  \caption{Hankel DMD for ECA.}
  \label{table:HDMD}
  \centering
  \begin{tabular}{lcccc}
    \hline
       ECA & period $T$  & reproducibility of the Koopman analysis  \\
    \hline
    4 & N>T & completely reproduced \\
    184 & N=T & completely reproduced  \\
    3, 73 & N>T & partly reproduced  \\
    15, 60 & N<T & partly reproduced \\
    54 & N>T & completely reproduced \\
    30  & N<T & completely reproduced \\
    \hline
  \end{tabular}
\end{table}

As we have shown in Theorem~\ref{Hankel DMD}, if we introduce delay-embedded time series longer than the number of states 
on the given periodic orbit, then $A_H$ can reproduce the dynamics.
Thus, for ECA 54, we need $p=4$ to describe the $4$-periodic dynamics with $A_H$.
However, this condition is sufficient but not necessary.
For example, ECA 30 in Fig.~\ref{fig6}(d) shows the $260$-periodic dynamics, namely, it is sufficient to reproduce the dynamics with $p=260$ from Theorem~\ref{Hankel DMD}. However, we need just $p=20$ to reproduce the dynamics in this case. 

In contrast to the above cases, for the rules shown in Figs.~\ref{fig6}(a), (b), and (c), we obtain the same sets of S{Koopman eigenvalues as those obtained} by standard DMD except for $0$.
This is because if $A_H$ with a delay $p$ can describe the dynamics, we obtain the same nonzero eigenvalues from $A_H$ with a longer delay $p' (>p)$.
Namely, no more Koopman eigenvalues of the periodic orbit can be obtained even if we use HDMD for those cases.  
Therefore, for ECA 60, we cannot obtain more Koopman eigenvalues even if we use a larger Hankel matrix. 

In summary, HDMD can reproduce the dynamics of ECA and improves the reproducibility of the eigenvalues.
In particular, all the eigenvalues other than $0$ obtained by HDMD with sufficiently large Hankel data matrices are Koopman eigenvalues. However, if we require a complete set of the Koopman eigenvalues and Koopman modes for the given periodic orbit, HDMD is still insufficient.

\begin{figure}[h]
  \centering
  (a) ECA 3\\
  \includegraphics[width=0.7\linewidth]{./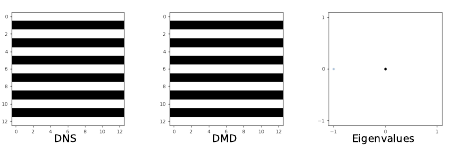}\\
  (b) ECA 73\\
  \includegraphics[width=0.7\linewidth]{./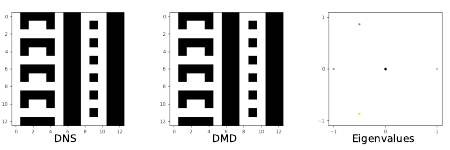}\\
  (c) ECA 15\\
  \includegraphics[width=0.7\linewidth]{./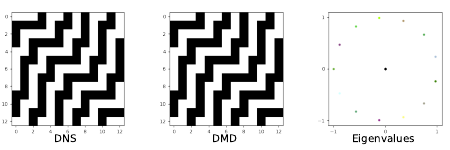}\\
  (d) ECA 30\\
  \includegraphics[width=0.7\linewidth]{./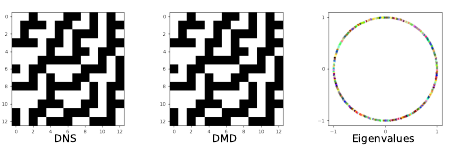}\\
  (e) ECA 60\\
  \includegraphics[width=0.7\linewidth]{./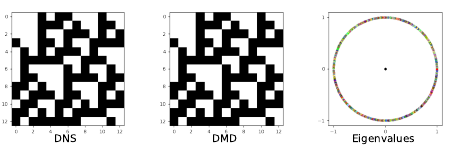}\\
  (f) ECA 54\\
  \includegraphics[width=0.7\linewidth]{./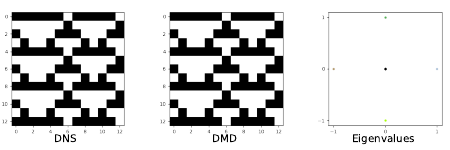}\\
  \includegraphics[width=0.7\linewidth]{./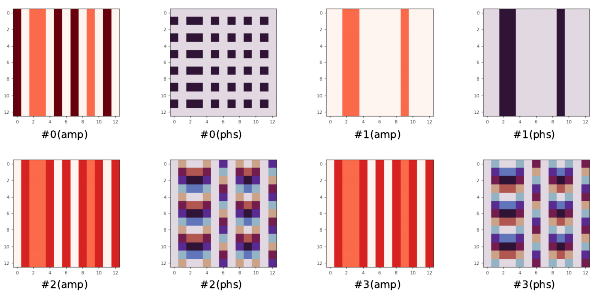}
  \caption{HDMD of ECA on a lattice with $N=13$ cells with periodic boundary conditions. (a) ECA 3 (delay $p+1=2$), (b) ECA 73 ($p+1=6$), (c) ECA 15 ($p+1=26$), (d) ECA 30 ($p+1=20$), (e) ECA 60 ($p+1=63$), and (f) ECA 54 ($p+1=4$). 
  For each rule, the evolution of the state vector obtained by DNS of ECA, the evolution of the state vector predicted by HDMD, 
  and the eigenvalues of the HDMD matrix on the complex plane are plotted.
  For ECA 54, the dynamic modes are also plotted (the dynamics modes corresponding to delayed components are not shown).
  }
  \label{fig6}
\end{figure}

\subsection{Extended DMD}
Finally, we apply EDMD to ECA. To use EDMD, we need to choose nonlinear observables appropriate for ECA. 
In our previous study~\cite{taga2021ecakoopman}, we used indicator functions~\cite{budivsic2012applied} for Koopman analysis, which map the system state to a number $1,\ldots, 2^N$ without considering the spatial configuration of $N$-cell ECA. However, since the evolution rules of ECA are spatially local, function systems with spatially localized properties are more efficient.
In this study, we propose the following nonlinear observables:
\begin{align}
c_{q}({\bm x})=\Pi_{j=1} (x^{(j)})^{\alpha_{j}^{(q)}},
\label{nonlinearobservables}
\end{align}
where $x^{(j)} \in \{1, -1\}$ 
is the $j$th element of the observed state vector ${\bm x} \in \{1, -1\}^N$, $\alpha_{j}^{(q)} \in\{0,1\}$, and the index $q$ is given by 
\begin{align}
    q = \sum_{j=1}^N 2^{j-1} \alpha_{j}^{(q)} + 1.
\end{align}
We show an example in Fig.~\ref{edmdVector}.
\begin{figure}[h]
  \centering
  \includegraphics[width=1\linewidth]{./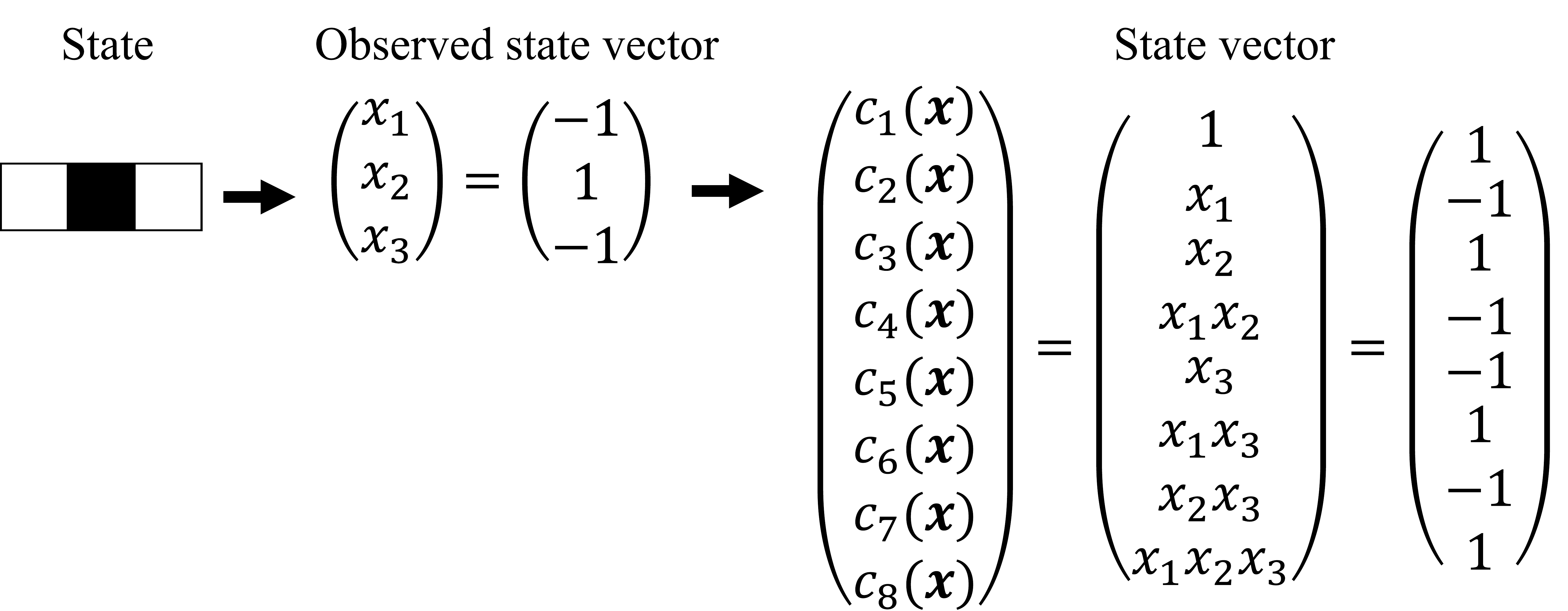}
  \caption{Example of the state vector for EDMD with the 3-cell system.}
  \label{edmdVector}
\end{figure}
If we can take $\{ \alpha_{j}^{(q)} \}$ arbitrarily, the above definition gives $2^N$ different observables $c_1({\bm x}), ..., c_{2^N}({\bm x})$, which form an orthonormal basis of the function space defined on $M$.
These nonlinear observation functions are related to the Walsh function system $\{w_n\}_{n=1,2,\ldots}$~\cite{paley1932orthogonal}, which is a complete orthonormal system defined on $[0,1)$.

The Walsh function system can be constructed using the Rademacher function system~\cite{paley1932orthogonal}. The Rademacher function system $\{r_n\}_{n=1,2,\ldots}$ is an orthonormal but not complete function system defined on $[0,1)$, {and} is defined as follows:
\begin{eqnarray}
r_0(t)&=&1,\\
r_1(t)&=&\left\{\begin{split}-1,\quad t\in[0,0.5),\\1,\quad x\in[0.5,1),\end{split}\right.\\
r_n(t)&=&r_1(t/2^{n-1}) \quad (n>1).
\end{eqnarray}
Here, the domain of $r_1(t)$ is extended periodically for $t>1$. 
The Walsh function system is defined as
\begin{align}
  w_q({\bm x})=\Pi_{j=1}r_{j}^{\alpha_{j}^{(q)}},
\end{align}
where $\alpha^{(q)}_j$ is the same as defined above.
By using the Rademacher functions, we can represent $x^{(k)}$
as
$x^{(k)}=r_k(\frac{u({\bm x})}{2^N})$,
where we introduced the index of each state ${\bm x}$ as 
\begin{align}
 u({\bm x}) = \sum_{j=1}^N 2^{j-1} \frac{x^{(j)}+1}{2}.
\end{align}
Using this index, the nonlinear functions $c_k({\bm x})$ can be represented with the Walsh functions as $c_k({\bm x})=w_k(\frac{u(\bm x)}{2^N})$.

If we can use all of the $2^N$ observables, we can lift the $N$-dimensional state of ECA to {a} $2^N$-dimensional function space and reproduce all the Koopman eigenvalues associated with a given orbit, in principle.
In practice, this quickly becomes impractical as $N$ increases. {However, if we consider only a given $T$-periodic orbit, the number of independent states on the orbit is $T$.
Therefore, we need only $T$ nonlinear functions at least to reproduce the dynamics and Koopman eigenvalues associated with the given orbit. }

Thus, to reduce the computational cost, we also consider the case that only a subset of the nonlinear functions can be used. 
This gives an upper limit to the maximal number of eigenvalues that can be reproduced by EDMD{, namely, to reproduce all the Koopman eigenvalues associated with a given orbit, the number of functions in the subset should be at least $T$.}
%
In what follows, we use a set of nonlinear functions $\left\{ c_{q}({\bm x})=\Pi_{j=1} (x^{(j)})^{\alpha_{j}^{(q)}}: \sum_{j=1}^N \alpha_j^{(q)} \leq m'\right\}$ for EDMD, where $m'\in \{1,\ldots, N\}$.
The total number of the functions is given by $\sum_{j=0}^{m'}{}_{N} C_j$, where ${}_N C_j$ represents a binomial coefficient.
We note that the eigenvalues $0$ obtained by EDMD are irrelevant to the Koopman eigenvalues as discussed in Sec. III for standard DMD.

Figure~\ref{fig8} shows the evolution of the state vectors obtained by DNS of ECA with ECA 3, 73, 15, 30, 60, and 54, the evolution of the state vector predicted by EDMD with the above set of functions,
and the eigenvalues of the EDMD matrix on the complex plane.
The dynamic modes are shown only for ECA 54.
The reproduced state of the $k$th cell can be obtained from $c_{2^{k-1}}({\bm x})$ for 
$k=1,\ldots,N$, because the following equation holds:
\begin{align}
  x^{(k)}=r_k\left(\frac{u({\bm x})}{2^N}\right)=w_{2^{k-1}}\left(\frac{u({\bm x})}{2^N}\right)=c_{2^{k-1}}({\bm x}).
\end{align}
The dynamics and Koopman eigenvalues of ECA are now completely reproduced by EDMD. 
For ECA 3, 73, 15, 60, for which both standard DMD and HDMD fail, all the Koopman eigenvalues of the given periodic orbit are reproduced by EDMD.
In particular, for ECA 3, we can reproduce the Koopman eigenvalues not by considering the nonlinear function of ${\bm x}$ but by simply adding a constant function to the data matrix of standard DMD.

However, for ECA 60, a large number of nonlinear functions, $\sum_{j=1}^N \alpha_j^{(q)} \leq 4$, are necessary.
Figure~\ref{fig9} 
shows the evolution of the state vector predicted by EDMD (top row) and the eigenvalues of the EDMD matrix on the complex plane (bottom row) for ECA 60. In this case, the number of nonlinear functions is $\sum_{j=0}^{4}{}_{N} C_j = 1093$ 
for the case of $N=13$.
This is larger than the period $T=819$, namely, it is enough to reproduce the dynamics and the Koopman eigenvalues associated with the given periodic orbit.

The nonlinear observables used in this subsection can depend on the cell states irrespective of their spatial locations; {hence,} they use `global' information of the system dynamics. In Appendix C, we examine a different choice of 'local' nonlinear observables that depend only on the neighboring cell states and briefly discuss the difference between the two types of observables.

\begin{figure}[h]
  \centering
  (a) ECA 3\\
  \includegraphics[width=0.7\linewidth]{./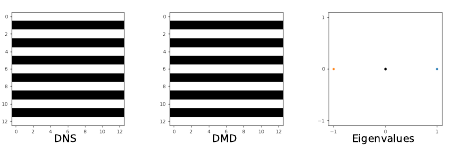}\\
  (b) ECA 73\\
  \includegraphics[width=0.7\linewidth]{./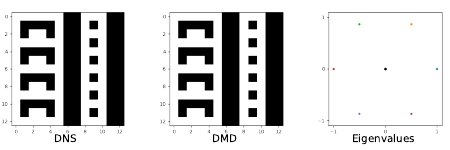}\\
  (c) ECA 15\\
  \includegraphics[width=0.7\linewidth]{./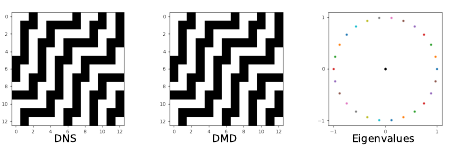}\\
  (d) ECA 30\\
  \includegraphics[width=0.7\linewidth]{./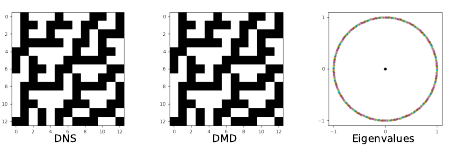}\\
  (e) ECA 60\\
  \includegraphics[width=0.7\linewidth]{./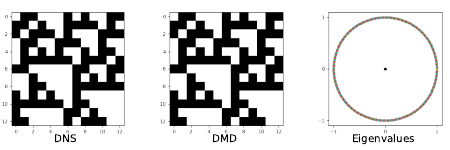}\\
  (f) ECA 54\\
  \includegraphics[width=0.7\linewidth]{./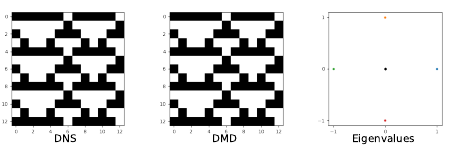}\\
  \includegraphics[width=0.7\linewidth]{./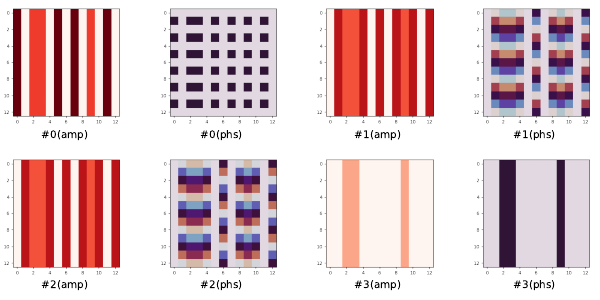}
  \caption{EDMD of ECA on a lattice of $N=13$ cells with periodic boundary conditions. (a) ECA 3 ($m'=1, \#c'=14$), (b) ECA 73 ($m'=2, \#c'=92$), (c) ECA 15 ($m'=2, \#c'=92$), (d) ECA 30 ($m'=3, \#c'=378$), (e) ECA 60 ($m'=4, \#c'=1093$), and (f) ECA 54 ($m'=2, \#c'=92$). For each rule, the figures show the evolution of the state vector obtained by DNS of ECA, the evolution of the state vector predicted by EDMD,
  and the eigenvalues of the EDMD matrix on the complex plane.
  For ECA 54, the dynamic modes are also shown,
  where only the elements $c_k({\bm x}), k=2^{n-1}, n=1,\ldots, N$,
  which correspond to the actual values of the cells,
  are plotted. $\#c'$ is the number of nonlinear functions for each $m'$.
  }
  \label{fig8}
\end{figure}

\begin{figure}[h]
  \centering
  \includegraphics[width=\linewidth]{./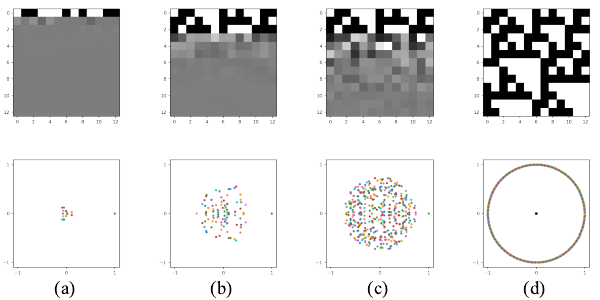}
  \caption{EDMD of ECA 60 on a lattice of $N=13$ cells with periodic boundary conditions. (a) $m'=1$, (b) $m'= 2$, (c) $m'=3$, and (d) $m'= 4$. Figures in the top row show the evolution of the state vector predicted by EDMD, where the same initial conditions are used. Figures in the bottom row show the eigenvalues of the EDMD matrix on the complex plane.}
  \label{fig9}
\end{figure}

\section{Discussion}
We have performed a DMD analysis of ECA. We considered three types of DMD methods, i.e., standard DMD, Hankel DMD, and Extended DMD. 

Standard DMD could not reproduce the results of the Koopman operator analysis when the period of the dynamics is longer than the number of cells. In particular,
it could yield spurious eigenvalues that are nonexistent in the true Koopman eigenvalues in some cases. 
This indicates that care must be taken in applying DMD to extract the Koopman eigenvalues of real-world systems from time-series data.

{Hankel DMD, employing} a delay-embedded enlarged data matrix{,} improved the reproducibility of dynamics and  eigenvalues. {However, it} still failed to reproduce some of the eigenvalues of the given periodic orbit {under certain conditions}. 

{In contrast, Extended DMD} using discrete Walsh-like nonlinear functions could completely reproduce the dynamics and all Koopman eigenvalues for the observed periodic orbit when a sufficiently large number of observation functions are used.
Though we {theoretically} need all $2^N$ observables to completely span the {entire} function space of $N$-cell ECA, {in practice, using only a subset of} nonlinear functions with $\sum_{j=1}^N \alpha_j^{(q)} \leq m'${yielded} reasonably good results for many rules.

The reproducibility of dynamics and Koopman eigenvalues by each DMD algorithm reflects the linear algebraic properties of the DMD matrix calculated from the data matrices.
{While our focus in this study was} only on the asymptotic periodic dynamics of ECA for clarity{, }reproducibility of DMD analysis for finite-state systems including the transient dynamics should be {investigated} in more detail in future works.

In this paper, we have demonstrated that ECA can be used as a good testbed for investigating DMD algorithms and discussed the theoretical conditions for the reproducibility of the dynamics and eigenvalues.
It would be an interesting future topic to perform a more exhaustive numerical analysis of ECA with various system sizes and initial conditions
to investigate whether DMD can be used to gain new insights into the properties of ECA.
 
\section*{Acknowledgments}

We acknowledge JSPS KAKENHI JP22K11919, JP22H00516, JP22K14274, and JST CREST JP-MJCR1913 for financial support.

\section*{Conflicts of interest}

The authors have no conflicts to disclose.

\section*{Data availability}

The data that supports the findings of this study are available within the article.

\appendix
\section{Eigenvectors of $A$}
In the case that the data matrices $X, Y$ are constructed {solely from} an asymptotic $T$-periodic orbit $({\bm x}_0\to\ldots\to{\bm x}_{T-1}\to {\bm x}_T={\bm x}_0)$ and Eq.~(\ref{DMD evolution}) is satisfied, we can construct all the eigenvectors of $A$ as follows. 
First, all vectors in the column space of $X$, {denoted as} $\mbox{Col}(X)$, are on the $T$-periodic orbit and satisfy $A^T{\bm x}={\bm x}$ because ${\bm x}_{n+T}={\bm x}_n$.
Thus, if we construct a vector
\begin{align}
  {\bm v}=\sum_{n=0}^{T-1} \lambda^{-n}{\bm x}_n,
  \label{eqa1}
\end{align}
it satisfies
\begin{align}
A {\bm v}
 &= \sum_{n=0}^{T-1} \lambda^{-n} A {\bm x}_n
 = \sum_{n=0}^{T-1} \lambda^{-n} {\bm x}_{n+1}
= \lambda \sum_{n'=1}^{T} \lambda^{-n'} {\bm x}_{n'}
\end{align}
and, if $\lambda^T = 1$,
\begin{align}
A {\bm v} = \lambda \sum_{n'=0}^{T-1} \lambda^{-n'} {\bm x}_{n'}
 = \lambda {\bm v},
 \label{eqa2}
\end{align}
where we used ${\bm x}_T = {\bm x}_0$. 
Therefore, if ${\bm v} \neq 0$, ${\bm v}$ is an eigenvector of $A$ with the eigenvalue $\lambda$ satisfying $\lambda^T = 1$.
Since $A^T {\bm v} = \lambda^T {\bm v}$ and $A^T {\bm v} = \sum_{n=0}^{T-1} \lambda^{-n} A^T {\bm x}_n = \sum_{n=0}^{T-1} \lambda^{-n} {\bm x}_n = {\bm v}$, $\lambda$ should satisfy $\lambda^T = 1$ when ${\bm v} \neq 0$.
Therefore, the associated eigenvalue $\lambda_j$ of $A$ can be either of $\lambda_j=\exp(2j\pi i/T)$ $(j=0, \ldots, T-1)$.
We note that, even if $\lambda_j \neq 0$, ${\bm v}_j$ defined in Eq.~(\ref{eqa1}) can be a zero-valued vector when ${\bm x}_0, \ldots, {\bm x}_{T-1}$ are not linearly independent. If so, ${\bm v}_j$ is not an eigenvector of $A$ even if Eq.~(\ref{eqa2}) is satisfied.

The transformation from the set $\{{\bm x}_n\}_{n=0,\ldots,T-1}$ to the set $\{{\bm v}_{j}\}_{j=0,\ldots,T-1}$ can be expressed by using a matrix $U$ as
\begin{align}
  ({\bm v}_{0},\ldots,{\bm v}_{{T-1}})^{\top} = U({\bm x}_0,\ldots, {\bm x}_{T-1})^{\top},
\end{align}
where the $(j,k)$-element of $U$ is given by
\begin{align}
  U_{jk}=\exp \left( \frac{2(j-1)(k-1)\pi i}{T} \right).
\end{align}
The inverse of $U$ is given as $\frac{1}{T}U^{\dagger}$, where $U^{\dagger}$ is a Hermitian conjugate of $U$. Indeed,
\begin{align}
  \begin{split}
  &\left(U \frac{1}{T}U^{\dag}\right)_{jl}\\
  &=\sum_{k=1}^T\exp \left( \frac{2(j-1)(k-1)\pi i}{T} \right) \frac{1}{T}\exp \left( - \frac{2(k-1)(l-1)\pi i}{T} \right)\\
  &=\frac{1}{T}\sum_{k=1}^T \exp \left( \frac{2(j-l)(k-1)\pi i}{T} \right) = \delta_{jl}.
  \end{split}
\end{align}
Therefore, the state vectors in $X$ can be constructed as the linear combination of $\{{\bm v}_{j}\}_{j=0,\ldots,T-1}$ as
\begin{align}
  ({\bm x}_0,\ldots,{\bm x}_{T-1}) = \frac{1}{T}U^{\dagger}({\bm v}_{0},\ldots,{\bm v}_{{T-1}})^{\top}.
\end{align}
Therefore, $\mbox{Col}(X)$ corresponds to the space spanned by the set of vectors $\{{\bm v}_{j}\}_{j=0,\ldots,T-1}$.

Here, as stated previously, some of $\{{\bm v}_{j}\}_{j=0,\ldots,T-1}$ can be zero vectors.
That is, if the dimension $T'$ of the vector space spanned by $({\bm x}_0,\ldots,{\bm x}_{T-1})$ is smaller than $T$, 
namely, if ${\bm x}_0, ..., {\bm x}_{T-1}$ are not linearly independent, 
then $T-T'$ vectors among the $T$ vectors $\{{\bm v}_{j}\}_{j=0,\ldots,T-1}$ in Eq.~(\ref{eqa1}) are zero-valued vectors.
This is because the vectors $\{ {\bm v}_{j} \}$
are eigenvectors of $A$ corresponding to different eigenvalues if they are non-zero vectors, so they should be linearly independent of each other if they are non-zero vectors. 
Therefore, among $T$ vectors, only $T'$ vectors are non-zero and $T-T'$ vectors must be zero-valued vectors. Thus, we have obtained $T'$ independent eigenvectors that span $\mbox{Col}(X)$.

Next, regarding the vectors in the orthogonal complement $\mbox{Col}(X)^\perp$ of $\mbox{Col}(X)$, they are the eigenvectors of $A$ associated with the zero eigenvalues. 
$\mbox{Col}(X)^\perp$ is the kernel of $X^{+}$ from the property of the pseudo-inverse matrix~\cite{brata2012pseudoinverse},
and the kernel of $X^{+}$ is contained in the kernel of $A=YX^{+}$, namely, $A{\bm v}=0$ holds for $^\forall{\bm v} \in \mbox{Col}(X)^\perp$. As the dimension of $\mbox{Col}(X)^\perp$ is $N-T'$, we can take $N-T'$ independent eigenvectors associated with the zero eigenvalue that span $\mbox{Col}(X)^\perp$.

Summarizing, we have constructed all the eigenvectors of $A$ that can span the {entire} $N$-dimensional vector space, namely, the $T'$ independent eigenvectors associated with non-zero eigenvalues that span $\mbox{Col}(X)$ and $N-T'$ independent eigenvectors associated with the zero eigenvalues that span $\mbox{Col}(X)^\perp$.

\section{ECA rules used as examples}

We consider periodic orbits of 8 ECA rules as examples to illustrate the differences in the performance of DMD.
Their characteristics are briefly summarized as follows.

\begin{itemize}

\item ECA 4 and 184 are chosen as examples where standard DMD is sufficient for Koopman analysis.
\begin{itemize}
\item	ECA 4 gives an example of static dynamics.
\item	ECA 184 gives an example of short-periodic dynamics.
\end{itemize}

\item ECA 3, 73, and 15 are chosen as examples where standard DMD can reproduce the dynamics but cannot reproduce some of the Koopman eigenvalues associated with the given orbits, despite the short periodicity of the orbit. To reproduce all Koopman eigenvalues, EDMD should be used for these rules.
\begin{itemize}
\item	ECA 3 gives an example where all cells exhibit the same periodic dynamics. 
\item	ECA 73 gives an example where multiple groups of cells coexist that exhibit different periodic dynamics.
\item	ECA 15 gives an example where the length of the period is longer than the number of cells but standard DMD succeeds in reproducing the dynamics.
\end{itemize}
 
\item ECA 54, 30, and 60 are chosen as examples where standard DMD gives spurious eigenvalues, namely, standard DMD does not reproduce the dynamics and HDMD or EDMD is required.
\begin{itemize}
\item	ECA 54 gives an example of short-period dynamics where HDMD requires a delay embedding of the same length as the periodic orbit, that is, it requires a delay embedding of the maximum length that satisfies the conditions in Lemma 2.
\item	ECA 30 gives an example of chaotic long-period dynamics where HDMD can reproduce all Koopman eigenvalues for the given orbit and requires a delay embedding of a length smaller than the period.
\item	ECA 60 gives an example of chaotic long-period dynamics. HDMD cannot reproduce all the Koopman eigenvalues for the given orbit, and EDMD is required to reproduce them.
\end{itemize}
\end{itemize}

\section{EDMD with other nonlinear basis functions}

As discussed in Sec.~IV D, we can always reproduce all the Koopman eigenvalues of a given orbit by using a complete set of basis functions as observables for EDMD. In Sec.~IV D, we considered a set of observation functions that do not form a complete basis and investigated {the extent to which} they can reproduce the Koopman eigenvalues, where the observation functions are 'global' in the sense that they can depend on arbitrary cell states of the system. Here, we consider a different choice of the basis functions {that} are 'local' in the sense that they depend only on the neighboring cell states and examine how efficiently (i.e., without using large data matrices) we can reproduce the results of the Koopman analysis.

Here, instead of the set of functions
$\left\{ c_{q}({\bm x})=\Pi_{j=1} (x^{(j)})^{\alpha_{j}^{(q)}}: \sum_{j=1}^N \alpha_j^{(q)} \leq m'\right\}$
in Sec.~IV~D, 
we consider another set of functions 
\begin{multline}
  \left\{ c_{q}({\bm x})=\Pi_{j=1} (x^{(j)})^{\alpha_{j}^{(q)}} : \right.\\
  \left.\alpha_k^{(q)}=0\ \mbox{if}\; ^\exists i,\ ^\forall k,\ \  i-k\ \mbox{mod}\ N\geq m'' \right\}
\end{multline}
where $i,k,m''\in \{1, \ldots, N\}$ and $\alpha_k^{(q)}\in\{0,1\}${. These functions} take into account the cell states only within the $m''$ nearest neighbors of each cell{, providing} a set of 'local' Walsh functions that depend only on the states of the cells up to $m''$ nearest neighbors.

Figure~\ref{fig10} shows the results for the dynamics and eigenvalues obtained by DNS and reproduced by EDMD for ECA 73, 15, 30, 60, and 54{. These results} can be compared with the results in Sec.~IV D (Fig.~\ref{fig8}).
For ECA 15 and ECA 54, the dynamics can be reproduced by using the cell states {within only} two nearest neighbors, which is more efficient than the case with the original basis functions used in Sec.~IV. 
{However,} for ECA 30 and ECA 60, similar numbers of basis functions {are needed} as in the case with the original basis functions {to reproduce the dynamics}.

The difference between {these} two cases is as follows. The dynamics of ECA 15 and 54 preserve local structures of the patterns, while the dynamics of ECA 30 and ECA 60 cause chaotic changes {in} the overall structures of the patterns. 
This observation suggests that for rules preserving local structures, the set of 'local' functions that take into account only the states of neighboring cells is more efficient for EDMD.

For ECA 73, a large neighborhood size $m''=7$ {is needed} to reproduce the Koopman eigenvalues of the given orbit. This is because the pattern is composed of  $2$-periodic and $3$-periodic regions of sizes $6$, but the data matrix for $m''=6$ does not satisfy the condition 2 of Lemma 1, and therefore a set of functions up to $m''=7$ is necessary to satisfy the condition.
Thus, while the dynamics can be easily reproduced by DMD also for ECA 73, to reproduce the eigenvalues that require information {on} the entire dynamics of the system, it is necessary to use a set of functions that takes into account global information from distant cells.

The above results suggest that a set of basis functions that are efficient for EDMD can be chosen by considering the dynamical structure of the system.
\begin{figure}[h]
  \centering
  (a) ECA 73\\
  \includegraphics[width=0.7\linewidth]{./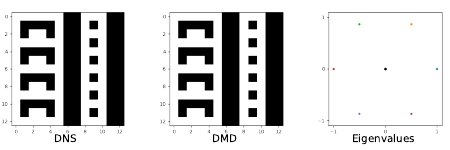}\\
  (b) ECA 15\\
  \includegraphics[width=0.7\linewidth]{./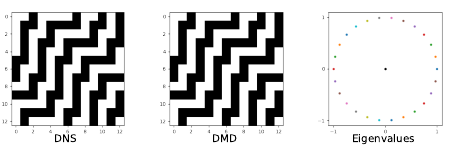}\\
  (c) ECA 30\\
  \includegraphics[width=0.7\linewidth]{./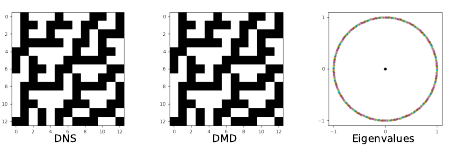}\\
  (d) ECA 60\\
  \includegraphics[width=0.7\linewidth]{./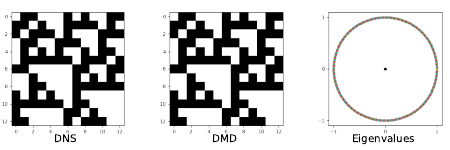}\\
  (e) ECA 54\\
  \includegraphics[width=0.7\linewidth]{./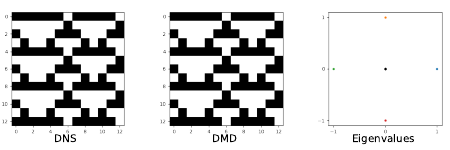}\\
  \caption{EDMD of ECA with a different set of observation functions on a lattice of $N=13$ cells with periodic boundary conditions. 
  (a) ECA 73 ($m''=7, \#c''={833}$), (b) ECA 15 ($m''=2, \#c''=27$), (c) ECA 30 ($m''=6, \#c''=417$), (d) ECA 60 ($m''=8, \#c''={1639}$), and (e) ECA 54 ($m''=2, \#c''=27$). For each rule, the figures show the evolution of the state vector obtained by DNS of ECA, the evolution of the state vector predicted by EDMD,
  and the eigenvalues of the EDMD matrix on the complex plane.
  {Only} the elements $c_k({\bm x}), k=2^{n-1}, n=1,\ldots, N$,
  which correspond to the actual values of the cells, are plotted. The symbol $\#c''$ represents the minimal number of nonlinear functions for each $m''$ required to reproduce the dynamics and eigenvalues.}
  \label{fig10}
\end{figure}

\bibliographystyle{apsrev4-1}
\bibliography{refs}

\end{document}